\DeclareMathOperator*{\argmin}{arg\,min}
\newcommand\myeq{\mkern1.2mu{=}\mkern1.2mu}
\newcommand\myleq{\mkern1.2mu{<}\mkern1.2mu}
\newcommand\myminus{\mkern1.2mu{-}\mkern1.2mu}
\newcommand{\boldY}{\mathbf{Y}}     
\newcommand{\boldZ}{\mathbf{Z}}     
\newcommand{\boldYQ}{\mathbf{Y}}   
\newcommand{\boldZQ}{\mathbf{Z}}   
\newcommand{\YQ}{Y}   
\newcommand{\ZQ}{Z}   
\newcommand{\boldK}{\mathbf{K}}
\newcommand{\Po}{\text{Poisson}}
\newcommand{\R}{\mathbb{R}}
\newcommand{\Rplus}{\mathbb{R}^+}
\newcommand{\N}{\mathbb{N}}
\newcommand{\Nnot}{\mathbb{N}_0}
\newcommand{\Multi}{\text{Multinomial}}
\newcommand{\st}{\mkern1.2mu{:}\mkern1.2mu}
\newcommand{\mzsum}{\sum_{i=1}^{k_2-1} z_i}
\newcommand{\ky}{k_1}
\newcommand{\kz}{k_2}
\newcommand{\sumoverz}{z_s}
\newcommand{\sumoverq}{q_s}
\newcommand{\sizemultb}{s_z}
\newcommand{\msg}{M}
\newcommand{\msgsmall}{m}
\newcommand{\boldlambda}{\boldsymbol{\Lambda}}
\newcommand{\idxset}{\mathcal{I}}
\newcommand{\idxsetc}{\mathcal{I}^0}
\newcommand{\sumset}{C}
\newcommand{\boldy}{\mathbf{y}}
\newcommand{\boldz}{\mathbf{z}}
\newtheorem{theorem}{Theorem}
\newtheorem{lemma}{Lemma}
\newtheorem{corollary}{Corollary}
\begin{document}

\title{Computing Unique Information for Poisson and Multinomial Systems} 


%

\author{
  \IEEEauthorblockN{Chaitanya Goswami, Amanda Merkley, and Pulkit Grover}
  \IEEEauthorblockA{Department of Electrical and Computer Engineering,\\
                    Carnegie Mellon University, Pittsburgh, PA\\
                    Email: \{cgoswami, amerkley, pgrover\}@andrew.cmu.edu}
        }

\maketitle

\begin{abstract}
Bivariate Partial Information Decomposition (PID) describes how the mutual information between a random variable $\msg$ and two random variables $Y$ and $Z$ is decomposed into unique, redundant, and synergistic terms. Recently, PID has shown promise as an emerging tool to understand biological systems and biases in machine learning. However, computing PID is a challenging problem as it typically involves optimizing over distributions. In this work, we study the problem of computing PID in two systems: the Poisson system inspired by the ``ideal Poisson channel'' and the multinomial system inspired by multinomial thinning, for a scalar $\msg$. We provide sufficient conditions for both systems under which closed-form expressions for many operationally-motivated PID can be obtained, thereby allowing us to easily compute PID for these systems. Our proof consists of showing that one of the unique information terms is zero, which allows the remaining unique, redundant, and synergistic terms to be easily computed using only the marginal and the joint mutual information.

\end{abstract}

\section{Introduction}\label{sec:intro}
In recent years, the problem of bivariate Partial Information Decomposition (PID)\footnote{Throughout this work we refer to bivariate PID as PID.}, i.e. decomposing the total information that random variables $Y$ and $Z$ contain about a random variable $\msg$ into different components, has received much attention~\cite{williams2010nonnegative,harder2013bivariate,griffith2014quantifying,bertschinger2014quantifying,gutknecht2021bits,banerjee2018unique,kay2018exact,finn2018pointwise,niu2019measure,rosas2020operational,gurushankar2022extracting,venkatesh2022partial,barrett2015exploration}. Formally, PID refers to a framework for describing how the total mutual information $I(\msg;Y,Z)$ can be partitioned into four components: (i) $UI(\msg; Y \backslash Z)$, the unique information $Y$ has about $\msg$ that is not in $Z$; (ii) $UI(\msg; Z \backslash Y)$, the unique information $Z$ has about $\msg$ that is not in $Y$; (iii) $RI(\msg; Y, Z)$, the redundant information shared between $Y$ and $Z$ about $\msg$; (iv) $SI(M; Y, Z)$, the synergistic information about $\msg$ attained through $Y$ and $Z$ jointly. PID has been used to examine biological systems~\cite{colenbier2020disambiguating,boonstra2019information, krohova2019multiscale,pica2017quantifying,gat1998synergy,venkatesh2020information}, quantify biases in machine learning~\cite{dutta2020information}, and analyze financial data~\cite{scagliarini2020synergistic}.  

Substantial work in the PID literature has been focused on defining the exact framework used in PID~\cite{williams2010nonnegative,harder2013bivariate,griffith2014quantifying,bertschinger2014quantifying,banerjee2018unique,kay2018exact,finn2018pointwise,niu2019measure,rosas2020operational,gurushankar2022extracting, gutknecht2021bits}, resulting in many competing measures. A PID measure of particular interest is the one proposed in~\cite{bertschinger2014quantifying}, for which the authors provide a decision-theoretic operational interpretation, further motivated in Sec. \ref{sec:PID_background}.  
However, limited work has investigated efficient computation of PID, with all the works focusing on the case where $\msg$, $Y$, and $Z$ are either jointly Gaussian~\cite{venkatesh2022partial, barrett2015exploration, gurushankar2022extracting}, or are discrete random variables~\cite{makkeh2017bivariate,banerjee2018computing,makkeh2018broja}. This problem is exacerbated since operationally well-motivated PID measures, e.g.~\cite{bertschinger2014quantifying}, are hard to compute, involving an optimization over distributions~\cite{venkatesh2022partial}. 

A notable work in computing the PID terms is~\cite{barrett2015exploration}, where the authors show that for the special case of a scalar $M$, and $M,Y,Z$ jointly Gaussian, several measures of PID (including~\cite{bertschinger2014quantifying,williams2010nonnegative,harder2013bivariate,griffith2014quantifying}) can be reduced to a much simpler form since one of the unique information (UI) terms is zero. All remaining PID terms are easily derived using certain desirable properties satisfied in many PID frameworks (see Sec~\ref{sec:PID_background}). This result makes computing the PID terms for Gaussian systems having scalar $\msg$ extremely easy, as it avoids the issue of optimizing over distributions by providing closed form expressions of mutual information terms.

In this work, we provide an analogous result for two systems: (i) \emph{The Poisson System}: $\msg$ is a scalar non-negative random variable, and $Y$ and $Z$ conditioned on $\msg$ follow a multivariate Poisson distribution, with the dependence of $Y$ and $Z$ on $\msg$ inspired by a well-known Poisson channel~\cite{atar2012mutual} (see Sec.~\ref{sec:problem_setup}), and (ii) \emph{The Multinomial System}: $\msg$ is a scalar random variable with support over positive integers, and $Y$ and $Z$ are multinomially thinned~\cite{chandramohan1985bernoulli} version of $\msg$ (see Sec.~\ref{sec:multinomial}). 
We derive \textit{sufficient} conditions for both systems under which at least one of the UI terms is zero. Hence, as in the Gaussian case \cite{barrett2015exploration}, the PID in the Poisson and multinomial systems is reduced to closed form expressions.

Our proof technique relies on the fact that under the proposed conditions, we can construct the Markov chain of the form $\msg\rightarrow \YQ \rightarrow \ZQ$, such that its marginals $Q(\msg, \YQ)$ and $Q(\msg, \ZQ)$ are the same as the marginals of the original system: $P(M,Y)$ and $P(\msg,Z)$.\footnote{{Technically, the Markov chain should be $M\rightarrow Y'\rightarrow Z'$, where $Y'$ and $Z'$ are random variables such that $Q(\msg, Y'\myeq y)=P(\msg,Y\myeq y)$ and $Q(\msg, Z'=z)=P(\msg,Z=z)$. To maintain consistency with the notation of~\cite{bertschinger2014quantifying}, we abuse notation and use $M\rightarrow Y\rightarrow Z$ instead of $M\rightarrow Y'\rightarrow Z'$, where $M\rightarrow Y\rightarrow Z$ is to be interpreted as an alternative joint distribution on $\msg,Y$ and $Z$ having a Markov structure.}} Following Theorem 2 of~\cite{venkatesh2022partial}, if the aforementioned Markov chain can be constructed, then for many PID measures, one of the UI terms must be zero\footnote{Authors of~\cite{venkatesh2022partial} refer to the existence of the Markov chain $\msg\rightarrow\boldY\rightarrow\boldZ$ having the same marginals $P(\msg,\boldY)$ and $P(\msg,\boldZ)$ as $P(\boldY|\msg)$ being ``Blackwell sufficient''~\cite{blackwell1953equivalent} for $P(\boldZ|\msg)$.}. For the specific PID measure proposed in~\cite{bertschinger2014quantifying}, Lemma~\ref{lemma:poisson_indep:1} explicitly shows how the existence of the aforementioned Markov chain implies that one of the UI terms is zero. We refer the readers to the Section III-A, and the Appendix B Part C of~\cite{venkatesh2022partial} for an explicit characterization of all the PID measures to which our result applies.

Our particular interest in the Poisson and multinomial systems stems from applications in neuroscience, where Poisson distribution is widely used to model neural spikes, and binomial thinning (a special case of multinomial thinning) is a widely used operator in modelling neural data~\cite{dayan2005theoretical}. Our results greatly facilitate the estimation of synergy (and PID, in general) for many neural systems, a question of great interest in the neuroscience community~\cite{schneidman2003synergy, park2020deep, varadan2006computational}. The Poisson system has also been used in many other fields, such as optical communication~\cite{bar1969communication,verdu1999poisson,grandell1997mixed,dayan2005theoretical},
and multinomial systems have been used in molecular communication~\cite{yilmaz2014arrival,farsad2017capacity}.

\section{Background}\label{sec:problem_setting}
\subsection{Definitions and Notations}\label{sec:defn}
\textbf{Notation}: Denote the set of all natural numbers, real numbers and positive real numbers as $\N$, $\R$, and $\Rplus$, respectively. Define $\Nnot=\N\cup\{0\}$ and let $[d]=\{1,\hdots,d\}\ \forall\ d\in\N$. Define $\mathbb{A}_{i}^{d}=\{(j_1,\hdots,j_i) \st j_1<j_2<\hdots<j_i, \text{ and } j_1,\hdots,j_i\in[d]\}$, e.g. $\mathbb{A}_2^3=\{(1,2),(1,3),(2,3)\}$. We denote $\mathbf{1}$ as the vector having all elements as $1$ (the dimension of the vector $\mathbf{1}$ can be deduced by context). For brevity, the probability notations of the form $P(A|B)$, and $P(A)$ are always understood to be as $P(A=a|B=b)$, and $P(A=a)$, respectively.  


\textbf{Multivariate Poisson Distribution}: 
An intuitive way to define multivariate Poisson distributions is to represent each random variable in the multivariate Poisson distribution as a sum of independent Poisson random variables~\cite{fish2022interaction,krishnamoorthy1951multivariate,mahamunulu1967note,karlis2005multivariate}. To illustrate, let us construct a bivariate Poisson random vector $\boldK=\begin{bmatrix}K_1&K_2\end{bmatrix}^T$, where 
\begin{align}
    K_1 = K_1^g+K_{1,2}^g,\\ 
    K_2 = K_2^g+K_{1,2}^g,
\end{align}
and $K_1^g,K_2^g$, and $K_{1,2}^g$ are mutually independent `generator' Poisson random variables with rates $\lambda_1, \lambda_2$, and $\lambda_{12}$, respectively. Here, the dependence between $K_1$ and $K_2$ is expressed through $K_{1,2}^g$, with covariance between $K_1$ and $K_2$ being equal to $\lambda_{1,2}$~\cite{johnson1997discrete}. 

For dimensions $d>2$, the Poisson random vector $\boldK=\begingroup 
\setlength\arraycolsep{4pt}
\begin{bmatrix}
    K_1&\hdots&K_d
\end{bmatrix}^T\endgroup$ can be similarly defined with:
\begin{align}
    K_l = \sum_{j=1}^{d}\sum_{(i_1,\hdots, i_j)\in\mathbb{B}_{jl}^{d}}K^g_{i_1,\hdots, i_j} \forall\ l\in[d],\ \text{where}\label{eq:K_l}\\
    \mathbb{B}_{jl}^{d} = \{(i_1,\hdots,i_j)\in\mathbb{A}_{j}^{d}|i_k = l\text{ for some }k\in[j]\},\nonumber
\end{align}
and $K_{i_1,\hdots, i_j}^g\sim\Po(\lambda_{i_1,\hdots, i_j})$  $\forall\ (i_1,\hdots,i_j)\in\mathbb{A}_{j}^{d}, j\in[d]$. Furthermore, the random variables $\{K_1^g,\hdots,K_{1,\hdots, d}^g\}$ are mutually independent.
\\
The $d$-dimensional random vector $\boldK$ consists of Poisson-distributed elements described by the parameter vector $\boldsymbol{\Lambda}=\begingroup 
\setlength\arraycolsep{2pt}
\begin{bmatrix}\lambda_1&\hdots&\lambda_d&\lambda_{1,2}&\hdots\lambda_{1,\hdots, d}
\end{bmatrix}^T\endgroup$.
\\
Letting $\boldK^g=[ K_1^g\dots K_d^g\  K_{1,2}^g\dots $ $K_{(d-1),d}^g\dots\ K_{1,\hdots, d}^g]^T$, 
we rewrite~\eqref{eq:K_l} in its vector form:
\begin{align}
    \boldK = A\boldK^g,\label{eq:poisson:dependent_defn}
\end{align}
where $A$ is a matrix of $0$'s and $1$'s indicating which generator elements from $\boldK^g$ are included in the overall variable $K_i$. We can decompose $A=\begin{bmatrix} A_1&\hdots&A_{d}\end{bmatrix}$, where $A_i$ is a $d\times \binom{d}{i}$ submatrix having no duplicate columns and each of its columns contains exactly $i$ ones and $(d-i)$ zeros~\cite{karlis2005multivariate}.
\\
An intuitive way to think about this multivariate Poisson distribution is to interpret the covariance effects in an ANOVA-like fashion~\cite{karlis2005multivariate}. For example, the ``main effect'' is given by
\begin{align}
    K_1^g, K_2^g, \dots, K_d^g, \nonumber
\end{align}
the ``two-way covariance effect'' is given by
\begin{align}
    K_{1, 2}^g, K_{1, 3}^g, \dots, K_{(d-1), d}^g, \nonumber
\end{align}
and so on until the $d$-way covariance effect:
\begin{align}
    K_{1, 2, \dots, d}^g. \nonumber
\end{align}
For a more detailed discussion, see~\cite{mahamunulu1967note,karlis2005multivariate,johnson1997discrete,krishnamoorthy1951multivariate}.
\\
Rather than specifying the full covariance structure, we can truncate dependence to $d'$-way covariance, where $d' < d$, by removing all generator variables in $\boldK^g$ associated to higher-order dependence. This is achieved by redefining $\boldK^g \myeq [ K_1^g\dots K_d^g\  K_{1,2}^g\dots $ $K_{(d-1),d}^g\dots\ K_{d-(d'-1),\hdots, d}^g]^T$ and letting the outer sum of~\eqref{eq:K_l} go to $d'$ instead of $d$. We define the multivariate Poisson distribution truncated at $d'$ covariance by $\boldK \sim \Po(d, d', \boldlambda)$.

The p.m.f. of the multivariate $\Po(d,d',\boldsymbol{\Lambda})$ distribution is as follows. Let $\mathbf{k}'\myeq\begingroup 
\setlength\arraycolsep{1.6pt}
\begin{bmatrix}
    k_{1,2}'&\hdots&k_{(d-1),d}'&\hdots&k_{d-(d'-1),\hdots, d}'
\end{bmatrix}^T\endgroup$, and $d_{\mathbf{k}'}$ be the dimension of $\mathbf{k}'$, then:
\begin{align}
    &P(\boldK\myeq\mathbf{k})\myeq e^{-\mathbf{1}^T\boldsymbol{\Lambda}}\prod_{i=1}^{d}\lambda_i^{k_i}\sum_{\mathbf{k'}\in C}\left(\prod_{(i_1,i_2)\in \mathbb{A}_2^{d}}\left(\frac{\lambda_{i_1,i_2}}{\lambda_{i_1}\lambda_{i_2}}\right)^{k_{i_1,i_2}'}\times\right.\nonumber\\
    &\hdots\times\left.\prod_{(i_1,\hdots, i_{d'})\in\mathbb{A}_{d'}^{d}}\left(\frac{\lambda_{i_1,\hdots, i_{d'}}}{\prod_{j=1}^{d'}\lambda_{i_{j}}}\right)^{k_{i_1,\hdots, i_{d'}}'}\times Q(\mathbf{k},\mathbf{k}')\right),\label{eq:poisson_multivariate:pmf}
\end{align}
 where $C=\{\mathbf{k}'\in\Nnot^{d_{\mathbf{k}'}} \st (a_i')^T\mathbf{k}'\leq k_i \ \forall\  \in [d]\}$, and
\begin{align}
  Q(\mathbf{k},\mathbf{k}')=\prod_{i=1}^{d}\frac{1}{(k_i-a_i'^T\mathbf{k}')!}\prod_{j=2}^{d'}\prod_{(i_1,\hdots,i_j)\in\mathbb{A}_j^{d}}\frac{1}{k^g_{i_1,\hdots, i_j}!},  
\end{align}
with $a_i'$ being the $i$-th row of the matrix $A'=[A_2\hdots A_{d'}]$ (see Appendix~\ref{sec:appx:B} for the derivation of the p.m.f.). Note that for $d'=1$, we have that $K$ is a collection of independent Poisson random variables, and when both $d=d'=1$, we recover the scalar Poisson distribution: 
\begin{align}
    \Pr(K=k) = \frac{e^{-\lambda}\lambda^k}{k!},\ \forall\ k\in\Nnot.
\end{align} 

\textbf{Multinomial Distribution}: A $d$-dimensional random vector $\boldK$ is said to follow a multinomial distribution, i.e. $\boldK\sim\Multi(\mathbf{k}; n,\mathbf{p})$, if its p.m.f. is as follows:
\begin{align}
    \Pr(\boldK=\mathbf{k}) = \frac{n!}{\prod_{i=1}^d k_i!}\prod_{i=1}^d p_i^{k_i},\label{eq:poisson:multinomial_defn}
\end{align}
where $\mathbf{p}=\left[\begin{array}{ccc}
     p_1&\dots&p_d
\end{array}\right]^{T}$ is a probability vector such that $\sum_{i=1}^d p_i=1$, and $\mathbf{k}=\left[\begin{array}{ccc}
     k_1&\dots&k_d  
\end{array}\right]^{T}$ is a $d$-dimensional categorical vector such that $\sum_{i=1}^d k_i=n$. 

\subsection{PID Background}\label{sec:PID_background}
Suppose $\msg$, $Y$, and $Z$ are random variables with joint distribution $P(\msg, Y, Z)$. According to~\cite{williams2010nonnegative, bertschinger2014quantifying}, there are three desirable equalities that should hold in a bivariate PID:
\begin{align}
    \label{eq:main_pid}
    I(\msg; Y, Z) &= UI(\msg; Y \backslash Z) + UI(\msg; Z \backslash Y) \nonumber
    \\
    &+ RI(\msg; Y, Z) + SI(\msg; Y, Z)
    \\
    I(\msg; Y) &= UI(\msg;Y \backslash Z) + RI(\msg; Y, Z) \label{eq:pid_aux1}
    \\
    I(\msg; Z) &= UI(\msg; Z \backslash Y) + RI(\msg; Y, Z). \label{eq:pid_aux2}
\end{align}
Here, $UI(\msg; Y \backslash Z)$ is the unique information $Y$ has about the message $\msg$ that is not in $Z$, $RI(\msg; Y, Z)$ is the redundant information shared between $Y$ and $Z$ about $\msg$, and $SI(M; Y, Z)$ is the synergistic information about $\msg$ that is attained through $Y$ and $Z$ jointly. Given~\eqref{eq:main_pid},~\eqref{eq:pid_aux1}, and~\eqref{eq:pid_aux2}, only one of $UI, RI$, or $SI$ need be defined to evaluate all four PID terms. Proposing a suitable measure is the focus of much PID research~\cite{williams2010nonnegative,harder2013bivariate,bertschinger2014quantifying,banerjee2018unique,kay2018exact,finn2018pointwise,niu2019measure,rosas2020operational,gurushankar2022extracting}. We refer the readers to~\cite{lizier2018information} for a review on PID.

Despite the diversity of proposed PID measures, several measures  \cite{harder2013bivariate, williams2010nonnegative, griffith2014quantifying, bertschinger2014quantifying} are in fact connected by Assumption $(\ast)$ of \cite{bertschinger2014quantifying}. This assumption states that UI should only depend on the marginals $P(\msg)$, $P(Y, \msg)$, and $P(Z, \msg)$, rather than the joint distribution $P(\msg, Y, Z)$. Recently, ``Blackwell sufficiency'' has been identified as another property \cite{venkatesh2022partial, venkatesh2023capturing}, distinct from Assumption $(\ast)$, connecting other PID measures \cite{bertschinger2014quantifying, banerjee2018unique, venkatesh2023capturing} based on the concept of sufficiency in statistical decision theory introduced by Blackwell \cite{blackwell1953equivalent}.

While Assumption $(\ast)$ makes intuitive sense, Blackwell sufficiency motivates an operational interpretation of the class of ``Blackwellian PIDs'' by giving the conditions for UI to go to zero, which was shown to be equivalent to stochastic degradedness of channels \cite{venkatesh2022partial}. The PID measure of \cite{bertschinger2014quantifying}, commonly referred to as BROJA-PID in the literature, is a Blackwellian PID that focuses on defining UI. While we illustrate our proofs with BROJA-PID (which is equivalently defined in \cite{griffith2014quantifying}), note that our results are applicable to Blackwellian PIDs in general.

The BROJA-PID defines UI as
\begin{align}
    UI(\msg; Y \backslash Z) = \min_{Q \in \Delta_P} I_Q(\msg; \YQ | \ZQ), \label{eq:ui_def}
\end{align}
where 
\begin{align}
    \Delta_P \myeq \{ Q(\msg, \YQ, \ZQ) \st Q(\msg, \YQ) &\myeq P(\msg, Y), \nonumber
    \\
    Q(\msg, \ZQ) &\myeq P(\msg, Z)\}.
\end{align}
Here, $I_Q(\msg; \YQ|\ZQ)$ is the conditional mutual information under the distribution $Q(\msg, \YQ, \ZQ)$. A useful property of Blackwellian PIDs is that the UI of one variable is zero if it is a stochastically degraded version of the other variable (Appendix B in \cite{venkatesh2022partial}). For BROJA-PID, this occurs when there is a Markov chain in $\Delta_P$ (Lemma 6 in \cite{bertschinger2014quantifying}). In general, there is not necessarily such a distribution in $\Delta_P$. However, we show that for the Poisson and multinomial systems there is indeed an appropriate Markov chain in $\Delta_P$. Finding the Markov chains in $\Delta_P$ for the Poisson and multinomial systems constitutes the essence of our proofs.



\section{Unique Information in the Poisson System}\label{sec:poisson_result}
In this section, we first introduce our definition of the Poisson system, which can be understood as a multivariate extension of the ``ideal Poisson channel'' used in~\cite{atar2012mutual}. Then, in Sec.~\ref{sec:poisson_theorem}, we derive sufficient conditions under which one of the UI terms in the PID of the Poisson system is zero.   

\subsection{Definition of the Poisson System}\label{sec:problem_setup}
Our definition of the Poisson system (briefly described in Sec.~\ref{sec:intro}) is inspired by the ``ideal Poisson channel'' discussed in~\cite{atar2012mutual} and~\cite{lapidoth1993reliability}. This is the canonical channel used to describe direct detection optical communication. The ideal Poisson channel is defined as $P(Y|\msg)=\Po(\gamma\msg)$, where the input and the output of the channel are $\msg$ and $Y$, respectively. Note that both $\msg$ and $Y$ are scalars, with $\msg$ being a non-negative random variable. Here, $\gamma$ plays the role of Signal-to-Noise Ratio (SNR). Intuitively, this can be understood as a linear scalar Poisson channel as $Y$ depends linearly on $\msg$ through its rate parameter. 

Our Poisson system provides an intuitive generalization of the ideal Poisson channel. This is because the ``main effects'' depend linearly on $\msg$, the ``two-way'' covariance terms (which can be thought of as product of two elements) depend linearly on $\msg^2$, and so on, culminating in the ``$d$-way covariance'' terms that depend linearly on $M^d$.
The reason for linear dependence on \textit{powers} of $M$ (i.e. $M$, $M^2$, ..., $M^d$) rather than just $M$ becomes apparent in our proof of Lemma~\ref{lemma:poisson:p(yg|y)}. This particular structure allows us to cancel out accumulated effects of $M$ resulting from consideration of higher order dependence. Thus, once these effects are removed, we can make the conclusion that at least one of the variables will have zero UI about the message.

Formally, we define the Poisson system as follows. Let $\msg$ be a non-negative random variable with p.d.f./p.m.f. $P(M)$, $\boldY$ be a $d_1$-dimensional random vector, and $\boldZ$ be a $d_2$-dimensional random vector. Define $P(\boldY|\msg)\myeq\Po(d_1,d_1', \boldsymbol{\Lambda}_{\boldY})$, and  $P(\boldZ|\msg)\myeq\Po(d_2,d_2', \boldsymbol{\Lambda}_{\boldZ})$, with: 
\begin{align}
    &\boldsymbol{\Lambda}_{\boldY} \myeq \begingroup\setlength\arraycolsep{1.6pt}\begin{bmatrix}
\lambda_1^{\boldY}&\dots&\lambda_{d_1}^{\boldY}&\lambda_{1,2}^{\boldY}&\dots&\lambda_{(d_1-1),d_1}^{\boldY}&\dots&\lambda_{d_1-(d_1'-1),\dots, d_1}^{\boldY}
\end{bmatrix}^T, \endgroup\nonumber\\
&
\lambda_{i_1,\hdots, i_{j}}^{\boldY}=\gamma^{\boldY}_{i_1,\hdots, i_{j}}\msg^{j}\ \forall\ j\in[d_1]\text{ and }(i_1,\hdots,i_{j})\in\mathbb{A}_{j}^{d_1},\nonumber\\
&\boldsymbol{\Lambda}_{\boldZ} = \begingroup\setlength\arraycolsep{1.6pt}
\begin{bmatrix}
\lambda_1^{\boldZ}&\dots&\lambda_d^{\boldZ}&\lambda_{1,2}^{\boldZ}&\dots&\lambda_{(d_2-1),d_2}^{\boldZ}&\dots&\lambda_{d_2-(d_2'-1),\dots, d_2 
}^{\boldZ}
\end{bmatrix}^T, \endgroup\nonumber\\
&
\lambda_{i_1,\hdots, i_{j}}^{\boldZ}=\gamma^{\boldZ}_{i_1,\hdots, i_{j}}\msg^{j}\ \forall\ j\in[d_2]\text{ and }(i_1,\hdots, i_{j})\in \mathbb{A}_{j}^{d_2}.\label{eq:poisson_system}
 \end{align}
Here, the parameters
\begin{align}
    \gamma_1^{\boldY},\hdots,\gamma_{d_1-(d_1'-1),\hdots, d_1}^{\boldY},\gamma_1^{\boldZ},\hdots,\gamma_{d_2-(d_2'-1),\hdots, d_2}^{\boldZ}\in\Rplus, \nonumber
\end{align}
{and can be thought as SNR terms analogously to the ideal Poisson channel}. Furthermore, let $A_{\boldY}$ and $A_{\boldZ}$ be the corresponding $A$-matrices, defined in~\eqref{eq:poisson:dependent_defn}, associated with $P(\boldY|M)$ and $P(\boldZ|M)$, respectively. {Intuitively, $P(\boldY|\msg)$ and $P(\boldZ|\msg)$ can be thought as a collection of $d_1$ and $d_2$ dependent ideal Poisson channels, respectively, with the structure of dependency as specified in~\eqref{eq:poisson_system}.}

\subsection{UI in the Poisson System}\label{sec:poisson_theorem}
In this section, we derive the sufficient conditions under which one of the UI terms is zero for the Poisson system introduced in Sec.~\ref{sec:problem_setup}. Our proof consists of showing the existence of a Markov chain of the form $\msg\rightarrow\boldYQ\rightarrow\boldZQ$ such that its marginals are the same as $P(\msg,\boldY)$ and $P(\msg,\boldZ)$ and it satisfies the set of conditions~\eqref{eq:theorem:poisson:cond:1} in Theorem~\ref{theorem:1:poisson_depend}.

{For the special case of $P(\boldY|\msg)$ and $P(\boldZ|\msg)$ consisting of conditionally independent ideal Poisson channels (see Corollary~\ref{cor:ui_indep_poisson}), our result can be interpreted as the Poisson equivalent for the well known result for additive Gaussian channels, where the channel with the overall lower SNR is just a stochastically degraded version of the channel with an overall higher SNR~\cite{shang2012noisy} (for the special case of \textit{scalar} $\msg$). This result follows a trend where many results that are known to be true for additive Gaussian channels are also true for the ideal Poisson channel~\cite{verdu1999poisson,atar2012mutual}. Theorem~\ref{theorem:1:poisson_depend}, derives equivalent conditions on the SNR, i.e.~\eqref{eq:theorem:poisson:cond:1}, for the more general case where the ideal Poisson channels in $P(\boldY|\msg)$ and $P(\boldZ|\msg)$ are not independent. In other words, the conditions in~\eqref{eq:theorem:poisson:cond:1} ensure that the inequality between the SNR terms of the Poisson channels holds for \textit{all} orders of dependencies, each associated to $M^1, \dots, M^d$ as discussed in Sec.~\ref{sec:problem_setup}.

Now, we provide a brief proof overview of Theorem~\ref{theorem:1:poisson_depend}. Theorem~\ref{theorem:1:poisson_depend} show the existence of the desired Markov chain $\msg\rightarrow\boldYQ\rightarrow\boldZQ$ by constructing a larger Markov chain:
$\msg\rightarrow \boldYQ\rightarrow \boldY^g\rightarrow \boldZ^g\rightarrow\boldZQ$, having the same marginals, $P(M,\boldY)$ and $P(M,\boldZ)$. Here, $\boldY^g$ and $\boldZ^g$ are intermediate variables that are analogous to the $\boldK^g$ defined in Sec.~\ref{sec:defn}, which we marginalize to obtain the desired Markov chain $\msg \rightarrow\boldYQ\rightarrow\boldZQ$. Note that to construct the Markov chain $\msg\rightarrow \boldYQ\rightarrow \boldY^g\rightarrow \boldZ^g\rightarrow\boldZQ$, it suffices to specify the five marginal distributions $\Tilde{Q}(\msg)$, $\Tilde{Q}(\boldYQ|\msg)$, $\Tilde{Q}(\boldY^g|\boldYQ)$, $\Tilde{Q}(\boldZ^g|\boldY^g)$, and $\Tilde{Q}(\boldZQ|\boldZ^g)$ due to the Markov structure. 

Theorem~\ref{theorem:1:poisson_depend} utilizes Lemmas~\ref{lemma:poisson_indep:1},~\ref{lemma:poisson:p(yg|y)}, and ~\ref{lemma:poisson_gen_indep}, so we first provide a brief discussion of these lemmas. First, in Lemma~\ref{lemma:poisson_indep:1}, we show that the existence of the aforementioned Markov chain indeed implies one of the UI terms is zero for the PID defined in~\cite{bertschinger2014quantifying}.
\begin{lemma}\label{lemma:poisson_indep:1}
    If there is a Markov chain $Q^*$ of the form $\msg\rightarrow \boldYQ\rightarrow \boldZQ$ in $\Delta_P$ for the optimization problem~\eqref{eq:ui_def}, then $\msg\rightarrow\boldYQ\rightarrow\boldZQ$ minimizes~\eqref{eq:ui_def}, and $UI(\msg; \boldZ \backslash \boldY)=I_{Q^*}(\msg;\boldZQ|\boldYQ)=0$.
\end{lemma}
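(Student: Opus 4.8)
The plan is to read the conclusion directly off the variational definition of BROJA-PID in~\eqref{eq:ui_def}, using only two elementary facts: the non-negativity of conditional mutual information, and the equivalence between a Markov chain and conditional independence. First I would write down the symmetric counterpart of~\eqref{eq:ui_def}, obtained by interchanging the roles of $\boldY$ and $\boldZ$, namely $UI(\msg; \boldZ \backslash \boldY) = \min_{Q \in \Delta_P} I_Q(\msg; \boldZQ | \boldYQ)$. This is the optimization problem whose minimizer and optimal value we must identify, and the feasible set $\Delta_P$ is unchanged since it is symmetric in its two marginal constraints.

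Next I would verify feasibility and evaluate the objective at $Q^*$. By hypothesis $Q^* \in \Delta_P$, so its marginals already match $P(\msg, \boldY)$ and $P(\msg, \boldZ)$, and $Q^*$ is therefore an admissible point of the minimization. The Markov structure $\msg \rightarrow \boldYQ \rightarrow \boldZQ$ is, by definition, the conditional independence $\msg \indep \boldZQ \mid \boldYQ$ under $Q^*$, and conditional independence is exactly the condition under which conditional mutual information vanishes; hence $I_{Q^*}(\msg; \boldZQ | \boldYQ) = 0$.

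Finally I would combine these observations with the fact that $I_Q(\msg; \boldZQ | \boldYQ) \geq 0$ for every $Q \in \Delta_P$. Thus $0$ is a global lower bound on the objective over the feasible set, and $Q^*$ attains it; consequently $Q^*$ is a minimizer and the optimal value is $0$, which yields $UI(\msg; \boldZ \backslash \boldY) = 0$. There is no genuine obstacle in this argument: it is essentially Lemma~6 of~\cite{bertschinger2014quantifying} transcribed into our notation. The one point that requires care is purely a matter of bookkeeping, namely ensuring that we minimize the $\boldZ$-given-$\boldY$ objective $I_Q(\msg; \boldZQ | \boldYQ)$ rather than the $\boldY$-given-$\boldZ$ objective, so that the conditional independence supplied by the Markov chain lines up with precisely the term that is forced to zero.
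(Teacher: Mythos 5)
Your proposal is correct and follows essentially the same argument as the paper's proof in Appendix~\ref{appx:proof of lemma 1}: the Markov structure gives $I_{Q^*}(\msg;\boldZQ|\boldYQ)=0$, and non-negativity of conditional mutual information over $\Delta_P$ makes $Q^*$ a minimizer. Your explicit note about minimizing the $\boldZ$-given-$\boldY$ objective (the symmetric counterpart of~\eqref{eq:ui_def}) is a reasonable bookkeeping clarification that the paper leaves implicit, but it does not change the substance of the argument.
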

\begin{proof}
    See Appendix~\ref{appx:proof of lemma 1}.
\end{proof} 

Lemma~\ref{lemma:poisson:p(yg|y)} is used in Theorem~\ref{theorem:1:poisson_depend} for defining $\Tilde{Q}(\boldY^g|\boldYQ)$ to construct the Markov chain $\msg\rightarrow\boldYQ\rightarrow\boldY^g$. The proof of Lemma~\ref{lemma:poisson:p(yg|y)} comprises of showing that the distribution $\Tilde{Q}(\boldY^g|\boldYQ,\msg)\myeq \Tilde{Q}(\boldY^g|\boldYQ)$. We use Bayes' Theorem to express $\Tilde{Q}(\boldY^g|\boldYQ,\msg)=\Tilde{Q}(\boldYQ|\boldY^g,\msg)\Tilde{Q}(\boldY^{g}|\msg)/\Tilde{Q}(\boldYQ|\msg)$, and show the right hand side of the equation does not contain any terms depending on $M$, hence obtaining the required result. 
\begin{lemma}\label{lemma:poisson:p(yg|y)}
    Let $\boldY$ be a $d$-dimensional vector and $\msg$ be a positive random variable, i.e. $\msg\sim P(\msg), P(\msg\leq 0)=0$. Let $P(\boldY|\msg)= \Po(d,d', \boldsymbol{\Lambda})$, where:
    \begin{align}
&\boldsymbol{\Lambda}=\begingroup 
\setlength\arraycolsep{2pt}
\begin{bmatrix}
    \lambda_1&\hdots&\lambda_d&\lambda_{1,2}&\hdots\lambda_{d-(d'-1),\hdots, d}
\end{bmatrix}^T\endgroup,\nonumber\\
&\lambda_{i_1,\hdots, i_{j}}=\gamma_{i_1,\hdots, i_{j}}\msg^{j}\ \forall\ j\in[d']\text{ and } (i_1,\hdots,i_{d'})\in\mathbb{A}_{j}^{d}.
    \end{align}
    Let $\boldY^g=\begingroup\setlength\arraycolsep{1.6pt}\begin{bmatrix}
    Y_1^g&\dots&Y_d^g&Y_{1,2}^g&\dots&Y_{(d-1),d}^g&\dots&Y_{d-(d'-1),\dots, d}^g
\end{bmatrix}^T\endgroup$, where $P(\boldY^g|M)=\Po(d_{\boldsymbol{\Lambda}},1,\boldsymbol{\Lambda})$
and $\boldY=A\boldY^g$, where $d_{\boldsymbol{\Lambda}}$ is the dimension of $\boldsymbol{\Lambda}$, and $A=[A_1\hdots A_{d'}]$ as defined in Sec.~\ref{sec:defn}. Then $P(\boldY^g|\boldY,\msg)=P(\boldY^g|\boldY)$, i.e. $\msg$, $\boldY^g$ and $\boldY$ form the following Markov chain $\msg\rightarrow\boldY\rightarrow\boldY^g$.
\end{lemma}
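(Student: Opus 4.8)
The plan is to follow the Bayes' rule route flagged in the proof sketch: write
\[
P(\boldY^g\mid\boldY,\msg)=\frac{P(\boldY\mid\boldY^g,\msg)\,P(\boldY^g\mid\msg)}{P(\boldY\mid\msg)},
\]
and argue that, on the support of the numerator, every factor carrying a dependence on $\msg$ is identical in the numerator and the denominator, so that the ratio is $\msg$-free. Since $\boldY=A\boldY^g$ is a deterministic map, the first factor is just the indicator $P(\boldY\mid\boldY^g,\msg)=\mathbf{1}[\,A\boldY^g=\boldY\,]$, which carries no dependence on $\msg$ at all; hence the only places $\msg$ can enter are through $P(\boldY^g\mid\msg)$ and $P(\boldY\mid\msg)$.

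Next I would extract the $\msg$-dependence of $P(\boldY^g\mid\msg)$. Because $\boldY^g$ is a vector of mutually independent generator Poissons with rates $\lambda_{i_1,\ldots,i_j}=\gamma_{i_1,\ldots,i_j}\msg^{j}$, its p.m.f.\ factorizes and separates cleanly into an exponential term, a pure power of $\msg$, and a remainder depending only on $\boldY^g$ and the $\gamma$'s:
\[
P(\boldY^g\mid\msg)=e^{-\mathbf{1}^T\boldsymbol{\Lambda}}\;\msg^{\,\sum_{j=1}^{d'} j\sum_{(i_1,\ldots,i_j)\in\mathbb{A}_j^{d}} Y^g_{i_1,\ldots,i_j}}\;\prod_{j,(i_1,\ldots,i_j)}\frac{\gamma_{i_1,\ldots,i_j}^{Y^g_{i_1,\ldots,i_j}}}{Y^g_{i_1,\ldots,i_j}!}.
\]
The exponent $\mathbf{1}^T\boldsymbol{\Lambda}=\sum_{j,(i_1,\ldots,i_j)}\gamma_{i_1,\ldots,i_j}\msg^{j}$ depends on $\msg$ but not on $\boldY^g$, so it is constant over the fiber $\{A\boldY^g=\boldY\}$; moreover it coincides exactly with the leading factor $e^{-\mathbf{1}^T\boldsymbol{\Lambda}}$ appearing in the marginal p.m.f.~\eqref{eq:poisson_multivariate:pmf} of $P(\boldY\mid\msg)$, so these exponentials cancel in the ratio.

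The key step --- and the place where the ``linear dependence on powers of $M$'' built into~\eqref{eq:poisson_system} does its work --- is the combinatorial identity
\[
\sum_{l=1}^{d} Y_l=\sum_{j=1}^{d'} j\sum_{(i_1,\ldots,i_j)\in\mathbb{A}_j^{d}} Y^g_{i_1,\ldots,i_j},
\]
valid whenever $\boldY=A\boldY^g$. It holds because each $j$-way generator $Y^g_{i_1,\ldots,i_j}$ is routed by $A$ into exactly the $j$ coordinates $Y_{i_1},\ldots,Y_{i_j}$, hence contributes $j$ copies of itself to $\mathbf{1}^T\boldY$. Consequently the power of $\msg$ in $P(\boldY^g\mid\msg)$ equals $\mathbf{1}^T\boldY$, a quantity fixed by $\boldY$ alone and therefore constant across the fiber. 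Writing the marginal as $P(\boldY\mid\msg)=\sum_{\boldY^g:\,A\boldY^g=\boldY}P(\boldY^g\mid\msg)$, every summand shares the same factor $e^{-\mathbf{1}^T\boldsymbol{\Lambda}}\msg^{\mathbf{1}^T\boldY}$, so it factors out and $P(\boldY\mid\msg)$ carries precisely the same $\msg$-dependence as the numerator. The ratio then reduces to
\[
P(\boldY^g\mid\boldY,\msg)=\frac{\prod_{j,(i_1,\ldots,i_j)}\gamma_{i_1,\ldots,i_j}^{Y^g_{i_1,\ldots,i_j}}/Y^g_{i_1,\ldots,i_j}!}{\sum_{\tilde\boldY^g:\,A\tilde\boldY^g=\boldY}\;\prod_{j,(i_1,\ldots,i_j)}\gamma_{i_1,\ldots,i_j}^{\tilde Y^g_{i_1,\ldots,i_j}}/\tilde Y^g_{i_1,\ldots,i_j}!},
\]
which contains no $\msg$, establishing $P(\boldY^g\mid\boldY,\msg)=P(\boldY^g\mid\boldY)$ and hence the Markov chain $\msg\rightarrow\boldY\rightarrow\boldY^g$.

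I expect the main obstacle to be bookkeeping rather than conceptual: one must verify the counting identity above carefully against the precise definition of $A=[A_1\cdots A_{d'}]$ (in particular that each column of $A_j$ has exactly $j$ ones), and confirm that the exponential prefactor of the explicit marginal p.m.f.~\eqref{eq:poisson_multivariate:pmf} matches $e^{-\mathbf{1}^T\boldsymbol{\Lambda}}$ so the cancellation is exact. A minor point to handle is the positivity hypothesis $P(\msg\le 0)=0$, which guarantees $\msg>0$ so that the power $\msg^{\mathbf{1}^T\boldY}$ is well defined and nonzero and the division is legitimate.
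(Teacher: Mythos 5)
Your proposal is correct and follows essentially the same route as the paper's proof: Bayes' rule, the observation that $P(\boldY\mid\boldY^g,\msg)$ is an $\msg$-free indicator, the identity $\mathbf{1}^T A\mathbf{y}^g=\sum_{j}j\sum_{(i_1,\hdots,i_j)}y^g_{i_1,\hdots,i_j}=\mathbf{1}^T\mathbf{y}$ coming from each column of $A_j$ having exactly $j$ ones, and cancellation of the common factor $e^{-\mathbf{1}^T\boldsymbol{\Lambda}}\msg^{\mathbf{1}^T\mathbf{y}}$ between numerator and denominator. The only cosmetic difference is that you write $P(\boldY\mid\msg)$ directly as a sum of $P(\boldY^g\mid\msg)$ over the fiber $\{A\boldY^g=\boldY\}$, whereas the paper starts from the explicit p.m.f.~\eqref{eq:poisson_multivariate:pmf} and cancels the powers of $\msg$ there; the two are equivalent.
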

\begin{proof}
    See Appendix~\ref{appx:proof of lemma 2}.
\end{proof}
Lemma~\ref{lemma:poisson_gen_indep} is  used in Theorem~\ref{theorem:1:poisson_depend} to extend the Markov chain $\msg\rightarrow\boldYQ\rightarrow\boldY^g$ to $\msg\rightarrow\boldYQ\rightarrow\boldY^g\rightarrow\boldZ^g$. For this extension of the Markov chain, we provide an explicit construction of $\Tilde{Q}(\boldY^g|\boldZ^g)$, where $\Tilde{Q}(\boldY^g|\boldZ^g)$ consists of a product of Multinomial distributions. The proof of Lemma~\ref{lemma:poisson_gen_indep} comes from a multivariate extension of a well-known result regarding Poisson variables, namely if $Y\sim \Po(\lambda)$, and $Z|Y\sim\text{Binomial}(Y,p)$, then $Z\sim \Po(p\lambda)$~\cite{chatterji1963some}. 

\begin{lemma}\label{lemma:poisson_gen_indep}
 Let $\msg$ be as defined in Sec.~\ref{sec:problem_setup}. Define, $\boldY=\begingroup\setlength\arraycolsep{1.6pt}\begin{bmatrix}
\boldY_{1}^T&\dots&\boldY_{d_1}^T
\end{bmatrix}^T\endgroup$ and $\boldZ=\begingroup\setlength\arraycolsep{1.6pt}\begin{bmatrix}
    \boldZ_1&\dots&\boldZ_{d_2}^T
\end{bmatrix}^T\endgroup$, where $\boldY_i=\begingroup\setlength\arraycolsep{1.6pt}\begin{bmatrix}
    Y_{i,1}&\dots&Y_{i,n_i}
\end{bmatrix}^T\endgroup,\boldZ_j=\begingroup\setlength\arraycolsep{1.6pt}\begin{bmatrix}
    Z_{j,1}&\dots&Y_{j,m_j}
\end{bmatrix}^T\endgroup$ are vectors of size $n_i,m_j\in\N$, respectively, $\forall\ i\in[d_1],\ j\in[d_2]$. Let, $P(Y_{i,j}|\msg) = \Po(\gamma^{\boldY}_{i,j}\msg^{i})\ \forall\ (i,j)\in\{(i,j) \st i\in[d_1],j\in[n_i]\}$, and $P(Z_{i,j}|\msg) = \Po(\gamma^{\boldZ}_{i,j}\msg^{i})\ \forall\ (i,j)\in\{(i,j) \st i\in[d_2],j\in[m_i]\}$. Furthermore, let all elements of $\boldY$ be mutually conditionally independent of each other (conditioned on $\msg$). Similarly, let all the elements of $\boldZ$ be mutually conditionally independent of each other (conditioned on $\msg$). If $d_1\geq d_2$ and~\eqref{eq:lemma:poisson_gen_indep:1} hold:
\begin{align}
 &\sum_{j=1}^{n_i}\gamma^{\boldY}_{i,j}\geq  \sum_{j=1}^{m_i}\gamma^{\boldZ}_{i,j}\ \forall\ i\in[d_2]\label{eq:lemma:poisson_gen_indep:1}
\end{align}
Then there is a distribution in $\Delta_P$ of the form 
    \begin{align}
        \Tilde{Q}(\msg,\boldYQ, \boldZQ) = P(\msg)P(\boldY|\msg)\Tilde{Q}(\boldZQ|\boldYQ).
        \label{eq:poisson:lemma2:1234}
    \end{align}
\end{lemma}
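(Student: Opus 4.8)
The plan is to exhibit the required $\Tilde{Q}$ explicitly. Because $\Tilde{Q}$ is postulated to factor as $P(\msg)P(\boldY|\msg)\Tilde{Q}(\boldZQ|\boldYQ)$, the marginal $\Tilde{Q}(\msg,\boldY)=P(\msg)P(\boldY|\msg)=P(\msg,\boldY)$ already matches $P(\msg,\boldY)$, and $\Tilde{Q}$ automatically carries the Markov structure $\msg\rightarrow\boldYQ\rightarrow\boldZQ$. Hence the only thing left to prove is that one can choose an $\msg$-free stochastic map $\Tilde{Q}(\boldZQ|\boldYQ)$ so that the induced marginal $\Tilde{Q}(\msg,\boldZ)=P(\msg)\sum_{\boldy}P(\boldy|\msg)\Tilde{Q}(\boldZ|\boldy)$ equals $P(\msg,\boldZ)$; equivalently, since $\Tilde{Q}(\msg)=P(\msg)$, that $\Tilde{Q}(\boldZ|\msg)=P(\boldZ|\msg)$ for every $\msg$.

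First I would exploit the product structure of the two systems. Conditioned on $\msg$, the coordinates of $\boldY$ are mutually independent and so are those of $\boldZ$, and every coordinate in the $i$-th block depends on $\msg$ only through $\msg^i$. This lets me build the channel blockwise, $\Tilde{Q}(\boldZQ|\boldYQ)=\prod_{i=1}^{d_2}\Tilde{Q}(\boldZ_i|\boldY_i)$, using only the first $d_2$ blocks of $\boldY$ and discarding blocks $d_2+1,\dots,d_1$ (permissible since $d_1\geq d_2$). If each block map reproduces $P(\boldZ_i|\msg)$, then by conditional independence across blocks $\Tilde{Q}(\boldZ|\msg)=\prod_i\Tilde{Q}(\boldZ_i|\msg)=\prod_i P(\boldZ_i|\msg)=P(\boldZ|\msg)$, as required.

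For a fixed block $i$, write $\alpha_i=\sum_{j=1}^{n_i}\gamma^{\boldY}_{i,j}$ and $\beta_i=\sum_{j=1}^{m_i}\gamma^{\boldZ}_{i,j}$. I would define $\Tilde{Q}(\boldZ_i|\boldY_i)$ by (i) forming the deterministic sum $S_i=\sum_{j=1}^{n_i}Y_{i,j}$ and (ii) drawing $(\boldZ_i,W_i)\sim\Multi(\cdot\,;S_i,\mathbf{q}_i)$ with the $(m_i+1)$-vector $\mathbf{q}_i=(\gamma^{\boldZ}_{i,1}/\alpha_i,\dots,\gamma^{\boldZ}_{i,m_i}/\alpha_i,\,1-\beta_i/\alpha_i)$, then discarding the last ``leftover'' count $W_i$. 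The validity of $\mathbf{q}_i$ as a probability vector (nonnegative entries summing to one) is exactly the hypothesis $\beta_i\leq\alpha_i$, i.e.\ condition~\eqref{eq:lemma:poisson_gen_indep:1}. Correctness then follows from two standard Poisson facts, the multivariate extension of the thinning result quoted before the lemma: superposition gives $S_i\,|\,\msg\sim\Po(\alpha_i\msg^i)$, and multinomial thinning of a Poisson count yields independent Poisson coordinates, so $Z_{i,j}\,|\,\msg\sim\Po\!\big((\gamma^{\boldZ}_{i,j}/\alpha_i)\,\alpha_i\msg^i\big)=\Po(\gamma^{\boldZ}_{i,j}\msg^i)$, mutually independent, matching $P(\boldZ_i|\msg)$.

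The one point that must be checked with care, and the conceptual heart of the argument, is that $\Tilde{Q}(\boldZ_i|\boldY_i)$ does not depend on $\msg$. This is precisely why the shared power-of-$\msg$ structure matters: within block $i$ every rate carries the same factor $\msg^i$, so it cancels in the thinning probabilities $\mathbf{q}_i=\big(\gamma^{\boldZ}_{i,1}\msg^i/(\alpha_i\msg^i),\dots\big)$, leaving the constant ratios $\gamma^{\boldZ}_{i,j}/\alpha_i$. Thus the two hypotheses play complementary roles: $d_1\geq d_2$ guarantees that every order appearing in $\boldZ$ is available in $\boldY$ to be thinned, while~\eqref{eq:lemma:poisson_gen_indep:1} guarantees there is enough ``mass'' in each order of $\boldY$ to thin down to the corresponding order of $\boldZ$. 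Assembling the blockwise maps then produces a $\Tilde{Q}\in\Delta_P$ of the claimed form.
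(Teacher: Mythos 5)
Your construction is correct and is essentially the paper's: you build the same channel $\Tilde{Q}(\boldZQ|\boldYQ)=\prod_{i=1}^{d_2}\Tilde{Q}(\boldZ_i|\boldY_i)$ as a product of per-block multinomial thinnings with probabilities $\gamma^{\boldZ}_{i,j}/\alpha_i$, use $d_1\geq d_2$ and~\eqref{eq:lemma:poisson_gen_indep:1} in exactly the same way, and reduce the marginal check to the fact that multinomially thinning a Poisson total yields independent Poissons. The only difference is that you verify that fact by invoking standard Poisson superposition and splitting, whereas the paper carries out the equivalent computation explicitly (multinomial theorem plus the exponential series) in its auxiliary Lemma~\ref{lemma:indep_poisson:2}; your route is a legitimate and somewhat cleaner shortcut for the same step.
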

\begin{proof}
    See Appendix~\ref{appx:proof of lemma 3}.
\end{proof}
Now, we discuss Theorem \ref{theorem:1:poisson_depend}, which provides the sufficient conditions under which one of the UI terms in the Poisson system is zero.

\begin{theorem}\label{theorem:1:poisson_depend}
    If $\msg,\boldY$ and $\boldZ$ are defined as in Sec.~\ref{sec:problem_setup}, with $d_1'\geq d_2'$, and the following conditions hold: 
    \begin{align}
        \sum_{(i_1,\hdots, i_{j})\in\mathbb{A}_{j}^{d_1}}\gamma_{i_1,\hdots, i_{j}}^{\boldY}&\geq         \sum_{(i_1,\hdots,i_{j})\in\mathbb{A}_{j}^{d_2}}\gamma_{i_1,\hdots, i_{j}}^{\boldZ}\ \forall\ j\in[d_2'],\label{eq:theorem:poisson:cond:1}
    \end{align}
    then there exists a Markov chain $\msg\rightarrow\boldYQ\rightarrow\boldZQ$ that lies in $\Delta_P$. Consequently, $UI(\msg; \boldZ \backslash \boldY)=0$.
\end{theorem}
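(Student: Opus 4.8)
The plan is to exhibit an explicit element of $\Delta_P$ having the Markov structure $\msg\rightarrow\boldYQ\rightarrow\boldZQ$, and then invoke Lemma~\ref{lemma:poisson_indep:1} to conclude $UI(\msg;\boldZ\backslash\boldY)=0$. Rather than building this channel directly, I would route it through the generator variables by first constructing the longer chain $\msg\rightarrow\boldYQ\rightarrow\boldY^g\rightarrow\boldZ^g\rightarrow\boldZQ$ and then marginalizing out $\boldY^g$ and $\boldZ^g$. Because of the Markov structure, it suffices to specify the five conditionals $\tilde{Q}(\msg)$, $\tilde{Q}(\boldYQ|\msg)$, $\tilde{Q}(\boldY^g|\boldYQ)$, $\tilde{Q}(\boldZ^g|\boldY^g)$, and $\tilde{Q}(\boldZQ|\boldZ^g)$. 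I would set $\tilde{Q}(\msg)=P(\msg)$ and $\tilde{Q}(\boldYQ|\msg)=P(\boldY|\msg)$, so that the $(\msg,\boldY)$ marginal automatically equals $P(\msg,\boldY)$, meeting one of the two constraints of $\Delta_P$.

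The first nontrivial link, $\tilde{Q}(\boldY^g|\boldYQ)$, is supplied by Lemma~\ref{lemma:poisson:p(yg|y)}: taking $\tilde{Q}(\boldY^g|\boldYQ):=P(\boldY^g|\boldY)$, that lemma guarantees $P(\boldY^g|\boldY)=P(\boldY^g|\boldY,\msg)$, so $\msg\rightarrow\boldY\rightarrow\boldY^g$ is genuinely Markov (the reconstruction of generators from observations does not secretly use $\msg$). The consequence I would extract is that the induced joint $\tilde{Q}(\msg,\boldY,\boldY^g)=P(\msg)P(\boldY|\msg)P(\boldY^g|\boldY)$ coincides with the original $P(\msg,\boldY,\boldY^g)$; in particular the generator marginal satisfies $\tilde{Q}(\msg,\boldY^g)=P(\msg,\boldY^g)$, which is exactly a collection of conditionally independent Poisson variables with rates $\gamma^{\boldY}_{i_1,\dots,i_j}\msg^{j}$ grouped by order $j\in[d_1']$. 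This matching of the generator marginal is what makes the next lemma applicable.

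The second link, $\tilde{Q}(\boldZ^g|\boldY^g)$, is where the SNR hypotheses enter, via Lemma~\ref{lemma:poisson_gen_indep}. Since $\boldY^g$ and $\boldZ^g$ are collections of conditionally independent Poissons organized by order, I would identify the order-$j$ block of generators with the block $\boldY_j$ (resp.\ $\boldZ_j$) in Lemma~\ref{lemma:poisson_gen_indep}, so that the numbers of orders $d_1'$ and $d_2'$ play the roles of $d_1,d_2$ there and the aggregate per-order rate $\sum_{(i_1,\dots,i_j)\in\mathbb{A}_j^{d_1}}\gamma^{\boldY}_{i_1,\dots,i_j}$ plays the role of $\sum_j\gamma^{\boldY}_{i,j}$. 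Under this identification the hypotheses $d_1'\geq d_2'$ and~\eqref{eq:theorem:poisson:cond:1} are precisely~\eqref{eq:lemma:poisson_gen_indep:1}, so Lemma~\ref{lemma:poisson_gen_indep} furnishes a channel $\tilde{Q}(\boldZ^g|\boldY^g)$ (a product of multinomial thinnings, discarding the orders $j>d_2'$ that appear in $\boldY^g$ but not in $\boldZ^g$) that is free of $\msg$ and yields $\tilde{Q}(\msg,\boldZ^g)=P(\msg,\boldZ^g)$. Finally I would take $\tilde{Q}(\boldZQ|\boldZ^g)$ to be the deterministic map $\boldZ=A_{\boldZ}\boldZ^g$, i.e.\ the same generator-to-observation map that defines the original $P(\boldZ|\msg)$.

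It then remains to check that marginalizing $\boldY^g$ and $\boldZ^g$ leaves a chain in $\Delta_P$. The $(\msg,\boldY)$ marginal matches by construction. For the $(\msg,\boldZ)$ marginal I would use that $\boldZ=A_{\boldZ}\boldZ^g$ is the \emph{same} deterministic map under $\tilde{Q}$ and under the original system; since $\tilde{Q}(\msg,\boldZ^g)=P(\msg,\boldZ^g)$, pushing both distributions forward through $A_{\boldZ}$ gives $\tilde{Q}(\msg,\boldZ)=P(\msg,\boldZ)$. Hence $\msg\rightarrow\boldY\rightarrow\boldZ$ lies in $\Delta_P$, and Lemma~\ref{lemma:poisson_indep:1} delivers $UI(\msg;\boldZ\backslash\boldY)=0$. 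The step I expect to be the main obstacle is the bookkeeping in the third paragraph: making the identification between the order-indexed generator blocks and the block structure of Lemma~\ref{lemma:poisson_gen_indep} precise, and verifying that the per-order condition~\eqref{eq:theorem:poisson:cond:1} is exactly the correct aggregation of the individual generator rates, so that the thinning channel can be assembled order by order with no residual dependence on $\msg$.
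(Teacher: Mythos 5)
Your proposal is correct and follows essentially the same route as the paper's proof: the same extended chain $\msg\rightarrow\boldYQ\rightarrow\boldY^g\rightarrow\boldZ^g\rightarrow\boldZQ$, with Lemma~\ref{lemma:poisson:p(yg|y)} supplying the $\boldYQ\rightarrow\boldY^g$ link, Lemma~\ref{lemma:poisson_gen_indep} applied to the order-indexed generator blocks supplying the thinning channel $\tilde{Q}(\boldZ^g|\boldY^g)$, and the deterministic map $A_{\boldZ}$ closing the chain before invoking Lemma~\ref{lemma:poisson_indep:1}. The block identification you flag as the main obstacle is exactly how the paper proceeds, and your aggregation of per-order rates matches condition~\eqref{eq:theorem:poisson:cond:1} as required.
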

\begin{proof}
Let us provide the explicit construction of the Markov chain $\msg\rightarrow \boldYQ\rightarrow \boldY^g\rightarrow \boldZ^g\rightarrow\boldZQ$ having the marginals $P(\msg,\boldY)$, and $P(\msg,\boldZ)$. For $\Tilde{Q}(\msg)$ and $\Tilde{Q}(\boldYQ|\msg)$, we choose them to be equal to $P(\msg)$ and $P(\boldY|\msg)$, i.e. $\Tilde{Q}(\msg)=P(\msg)$, and $\Tilde{Q}(\boldYQ|\msg)=P(\boldY|\msg)$. Note that due to this construction, $P(\msg,\boldY)=\Tilde{Q}(\msg,\boldYQ)$ holds trivially. 

For $\Tilde{Q}(\boldY^g|\boldYQ)$, we use the result described in Lemma~\ref{lemma:poisson:p(yg|y)}. Note that the construction for $\Tilde{Q}(\boldY^g|\boldYQ)$ is not explicit but rather implicit. Let $d_{\boldsymbol{\Lambda}_{\boldY}}$ be the dimension of $\boldsymbol{\Lambda}_{\boldY}$, then we explicitly choose
\begin{align}
    \Tilde{Q}(\boldYQ|\msg) = \Po(d_1,d_1',\mathbf{\Lambda}_{\boldY}), \nonumber
    \\
    \Tilde{Q}(\boldYQ|\boldY^g,\msg)=\delta_{K}(\boldYQ=A^{\boldY}\boldY^g), \nonumber
    \\
    \Tilde{Q}(\boldY^g|\msg) = \Po(d_{\boldsymbol{\Lambda}_{\boldY}},1,\mathbf{\Lambda}_{\boldY}), \nonumber
\end{align}
 and derive $\Tilde{Q}(\boldY^g|\boldYQ,\msg)$ through Bayes' Theorem. Here, $\delta_{K}(\cdot)$ is the Kronecker delta function~\cite{oppenheim1997signals}. By Lemma~\ref{lemma:poisson:p(yg|y)}, we know that $\Tilde{Q}(\boldY^g|\boldYQ,\msg)=\Tilde{Q}(\boldY^g|\boldYQ)$, and hence we have the Markov chain $\msg\rightarrow\boldYQ\rightarrow\boldY^g$. 

For choosing $\Tilde{Q}(\boldZ^g|\boldY^g)$, we rely on the result of Lemma~\ref{lemma:poisson_gen_indep}. Let us define the random vectors:
\begin{align*}
\boldY_j &\myeq
\left[Y_{i_1,\hdots, i_j}^g\right]^T_{(i_1,\hdots,i_j)\in\mathbb{A}_j^{d_1}},
\\
\boldZ_j &\myeq
\left[Z_{i_1,\hdots, i_j}^g\right]^T_{(i_1,\hdots,i_j)\in\mathbb{A}_j^{d_2}},    
\end{align*}
 i.e. $\boldY_j$ and $\boldZ_j$ are random vectors containing all terms of the form $Y_{i_1,\hdots,i_j}^g$, and $Z_{k_1,\hdots,k_j}^g$, where $(i_1,\hdots,i_j)\in\mathbb{A}_{j}^{d_1}$ and $(k_1,\hdots,k_j)\in\mathbb{A}_{j}^{d_2}$, respectively. Note that we can write $\boldY^g\myeq\begingroup\setlength\arraycolsep{1.6pt}\begin{bmatrix}
\boldY_1^T&\dots&\boldY_{d_1'}^T\end{bmatrix}^T\endgroup$, and we define $\boldZ^g=\begingroup\setlength\arraycolsep{1.6pt}\begin{bmatrix}
\boldZ_1^T&\dots&\boldZ_{d_2'}^T\end{bmatrix}^T\endgroup$. Then, we construct $\Tilde{Q}(\boldZ^g|\boldY^g)$ as a product of $d_2$  multinomial distributions, described below:  
\begin{align*}
    \Tilde{Q}(\boldZ^g|\boldY^g) = \prod_{j=1}^{d_2}\Tilde{Q}(\boldZQ_j|\boldYQ_j),
\end{align*}
where $\Tilde{Q}(\boldZQ_j\myeq \mathbf{z}_j|\boldYQ_j\myeq\mathbf{y}_j)=\Multi(\mathbf{k}_j;N_j,\mathbf{p}_i)$, and
\begin{align*}
&N_j=\mathbf{1}^T\mathbf{y}_j,\ \mathbf{k}_j=\begingroup \setlength\arraycolsep{1.6pt}\begin{bmatrix}
    \mathbf{z}_j^T&\mathbf{1}^T\mathbf{y}_j- \mathbf{1}^T\mathbf{z}_j
    \end{bmatrix}^T\endgroup,\\
&\mathbf{p}_i=\begingroup \setlength\arraycolsep{1.6pt}\begin{bmatrix}
      p_{1,\hdots, j} &\cdots& p_{d_2-(j-1),\hdots, d_2} &1- \sum_{(i_1,\hdots,i_j)\in\mathbb{A}_{j}^{d_2}}p_{i_1,\hdots, i_j} \end{bmatrix}^T\endgroup,\\
        &p_{i_1,\hdots, i_j}=\frac{\gamma_{i_1,\hdots, i_j}^{\boldZ}}{\sum_{(i_1,\hdots,i_j)\in\mathbb{A}_{j}^{d_1}}\gamma_{i_1,\hdots, i_j}^{\boldY}}\ \forall\ (i_1,\hdots,i_j)\in\mathbb{A}_j^{d_2}.
\end{align*}
By construction, $\boldY^g$ consists of mutually independent Poisson random variables conditional on $\msg$, and the condition of Lemma~\ref{lemma:poisson_gen_indep} (specified in~\eqref{eq:lemma:poisson_gen_indep:1}) is satisfied in the assumption of Theorem \ref{theorem:1:poisson_depend}, i.e. equation~\eqref{eq:theorem:poisson:cond:1}. Therefore, after marginalizing $\boldYQ$ out of the Markov chain $\msg \rightarrow \boldY \rightarrow \boldY^g \rightarrow \boldZ^g$, we use the result of Lemma~\ref{lemma:poisson_gen_indep} on $\msg \rightarrow \boldY^g \rightarrow \boldZ^g$ to conclude that $\Tilde{Q}(\boldZ^g|\msg)=\Po(1,d_{\boldsymbol{\Lambda}_{\boldZ}},\boldsymbol{\Lambda}_{\boldZ})$, where $d_{\boldsymbol{\Lambda}_{\boldZ}}$ is the dimension of $\boldsymbol{\Lambda}_{\boldZ}$.


We choose $\Tilde{Q}(\boldZQ|\boldZ^g)$ as the following deterministic transformation $\boldZQ=A_{\boldZ}\boldZ^g$ to obtain the Markov chain $\msg\rightarrow\boldYQ\rightarrow\boldY^g\rightarrow\boldZ^g\rightarrow\boldZQ$. Marginalizing $\boldYQ$ and $\boldY^g$ in the above Markov chain, we get the following Markov chain: $\msg\rightarrow\boldZ^g\rightarrow\boldZQ$. Now, since $\boldZQ=A_{Z}\boldZ^g$ and $P(\boldZ^g|\msg)=\Po(d_{\boldsymbol{\Lambda}_{\boldZ}},1,\boldsymbol{\Lambda}_{\boldZ})$, $P(\boldZ|\msg)= \Po(d_2,d_2', \Lambda_{\boldZ})$ (by definition of the multivariate Poisson described in Sec.~\ref{sec:defn}). Since we have $\Tilde{Q}(\msg)=P(\msg)$ and $\Tilde{Q}(\boldZQ|\msg)=P(\boldZ|\msg)$, we also have $\Tilde{Q}(\msg,\boldZQ)=P(\msg,\boldZ)$. 

Now, marginalizing $\boldY^g$ and $\boldZ^g$ from the Markov chain $\msg\rightarrow\boldYQ\rightarrow\boldY^g\rightarrow\boldZ^g\rightarrow\boldZQ$, we obtain the desired Markov chain $\msg\rightarrow\boldYQ\rightarrow\boldZQ$ having $\Tilde{Q}(\msg,\boldYQ)\myeq P(\msg,\boldY)$ and  $\Tilde{Q}(\msg,\boldZQ)\myeq P(\msg,\boldZ)$. By Lemma~\ref{lemma:poisson_indep:1}, we know that the proposed $\Tilde{Q}(\msg,\boldYQ,\boldZQ)$ minimizes~\eqref{eq:ui_def}, and also has $I_{\Tilde{Q}}(\msg;\boldZQ|\boldYQ)=0$, concluding our proof.
\end{proof}
\begin{corollary}\label{cor:ui_indep_poisson}
    Suppose $M, \boldY$, and $\boldZ$ are defined as in Section \ref{sec:problem_setup}, with $d_1'=d_2'=1$. If
    \begin{align}
        \sum_{i=1}^{d_1}\gamma_{i}^{\boldY}  \myleq \sum_{i=1}^{d_2} \gamma_{i}^{\boldZ}, \nonumber
    \end{align}
    then $UI(M; \boldZ \backslash \boldY) = 0$.
\end{corollary}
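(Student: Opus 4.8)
The plan is to obtain the corollary as the $d_1' = d_2' = 1$ specialization of Theorem~\ref{theorem:1:poisson_depend}, so that the whole argument reduces to checking that the theorem's hypotheses collapse onto the single scalar inequality in the statement. First I would record what the truncation $d' = 1$ does to the model: as noted immediately after the p.m.f.~\eqref{eq:poisson_multivariate:pmf}, setting $d' = 1$ deletes every generator variable associated with two-way or higher covariance, so $P(\boldY\mid\msg) = \Po(d_1,1,\boldlambda_{\boldY})$ and $P(\boldZ\mid\msg) = \Po(d_2,1,\boldlambda_{\boldZ})$ are simply collections of $d_1$ and $d_2$ mutually conditionally independent ideal Poisson channels, with main-effect rates $\lambda_i^{\boldY} = \gamma_i^{\boldY}\msg$ and $\lambda_i^{\boldZ} = \gamma_i^{\boldZ}\msg$ and no cross terms.

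Next I would specialize the condition family~\eqref{eq:theorem:poisson:cond:1}. Because its outer quantifier ranges over $j \in [d_2'] = \{1\}$, the family consists of a single inequality, the $j = 1$ (``main-effect'') constraint $\sum_{i=1}^{d_1}\gamma_i^{\boldY} \ge \sum_{i=1}^{d_2}\gamma_i^{\boldZ}$, while the side requirement $d_1' \ge d_2'$ holds with equality $1 = 1$. Thus, once the higher-aggregate-rate vector is placed in the role of $\boldY$ in the theorem, the single hypothesis of the corollary is exactly~\eqref{eq:theorem:poisson:cond:1}. Theorem~\ref{theorem:1:poisson_depend} then produces a Markov chain $\msg \to \boldYQ \to \boldZQ$ in $\Delta_P$, and Lemma~\ref{lemma:poisson_indep:1} converts the existence of this chain into the vanishing of the unique-information term of the degraded (lower-aggregate-rate) channel.

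The hard part will not be any computation --- all of it is already carried out in Theorem~\ref{theorem:1:poisson_depend} and its supporting lemmas --- but rather the bookkeeping of \emph{which} UI term is forced to zero. The Markov chain runs from the stronger channel (larger $\sum_i \gamma_i$) to the weaker one, and it is the weaker channel whose unique information vanishes; so I would be careful to align the direction of the aggregate-rate inequality in the hypothesis with the variable appearing after the ``$\backslash$'' in the conclusion. Finally, I would note the interpretation advertised in Sec.~\ref{sec:problem_setup}: this special case is precisely the Poisson analogue of the Gaussian degradedness statement, saying that the conditionally independent Poisson observation vector with the smaller total SNR $\sum_i \gamma_i$ is a stochastically degraded version of the one with the larger total SNR.
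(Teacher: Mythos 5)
Your proposal matches the paper's proof, which is literally the one-line observation that the corollary is Theorem~\ref{theorem:1:poisson_depend} specialized to $d_1'\myeq d_2'\myeq 1$, where the condition family~\eqref{eq:theorem:poisson:cond:1} collapses to the single main-effect inequality and the channels become collections of conditionally independent ideal Poisson channels. Your caution about aligning the direction of the aggregate-rate inequality with the variable after the ``$\backslash$'' is well placed: as printed, the corollary's hypothesis $\sum_{i=1}^{d_1}\gamma_i^{\boldY}\myleq\sum_{i=1}^{d_2}\gamma_i^{\boldZ}$ is the reverse of the $j\myeq 1$ instance of~\eqref{eq:theorem:poisson:cond:1}, so applying the theorem with the roles of $\boldY$ and $\boldZ$ exchanged (as your argument effectively does) yields $UI(\msg;\boldY\backslash\boldZ)\myeq 0$ rather than the stated $UI(\msg;\boldZ\backslash\boldY)\myeq 0$; either the inequality or the UI term in the corollary's statement must be flipped for the deduction to go through, consistent with the surrounding prose that the lower-SNR channel is the stochastically degraded one.
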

\begin{proof}
    This follows from Theorem \ref{theorem:1:poisson_depend} when $d_1' \myeq d_2' \myeq 1$.
\end{proof}
{Corollary~\ref{cor:ui_indep_poisson} states that if $\boldY$ and $\boldZ$ consist of conditionally independent ideal Poisson channels, then the channel with overall lower SNR (measured as the sum of individual SNR's) is the stochastic degraded version of the one with the higher overall SNR.}

\section{Unique Information in the Multinomial System}
\label{sec:multinomial}
We show a parallel result for UI in the multinomial system, i.e. at least one of $Y$ or $Z$ has zero UI about $\msg$. In the multinomial system, $\msg \sim P(\msg)$ with support $\Nnot$, and $Y$ and $Z$ are multinomially thinned versions of $\msg$, i.e. $P(Y|\msg) \myeq \text{Multinomial}(\mathbf{y}; m, \mathbf{p_y})$ and $P(Z|\msg) \myeq \text{Multinomial}(\mathbf{z}; m, \mathbf{p_z})$, where $\mathbf{p_y}$ and $\mathbf{p_z}$ are probability vectors of size $s_y$ and $s_z$, respectively. For the multinomial system, we invoke the same construction-based proof technique as in the Poisson system described in Sec. \ref{sec:poisson_theorem}. Lemma \ref{lemma:multinom_conditional} is used to provide the explicit construction of  $\Tilde{Q}(\ZQ|\YQ)$ used to create the Markov chain $\msg\rightarrow \YQ \rightarrow \ZQ$.
\begin{lemma}\label{lemma:multinom_conditional}
    Suppose $Y \sim \text{Multinomial}(\mathbf{y}; n, \mathbf{p})$ and $Z|Y \sim \text{Multinomial}\left(\mathbf{z}; \sum_{j \in \idxset} Y_{j}, \mathbf{q}\right)$ have $\ky \myeq |\mathbf{p}|$ and $\kz \myeq |\mathbf{q}|$ number of classes, respectively, where $\idxset$ is an arbitrary set of class indices of $Y$ and $|\idxset| \leq \ky$. Then $Z \sim \text{Multinomial}\left(\mathbf{z}; n, \mathbf{q}^* \right)$, where 
    \begin{align} \label{eq:multinomial_distribution}
        {q_{i}}^* = 
        \begin{cases}
            \left[\sum_{j \in \idxset} p_{j} \right] \cdot q_i, &\ 1 \leq i < k_2
            \\
            1- \left[\sum_{j \in \idxset} p_{j} \right] \cdot \left[ \sum_{j=1}^{k_2-1} q_j  \right], &\ i = k_2.
        \end{cases}
    \end{align}
\end{lemma}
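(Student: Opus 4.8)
The plan is to prove this by direct marginalization over the latent total count, treating the lemma as the multinomial analogue of the Poisson/binomial thinning identity ($Y\sim\Po(\lambda),\ Z|Y\sim\text{Binomial}(Y,p)\Rightarrow Z\sim\Po(p\lambda)$) quoted just before Lemma~\ref{lemma:poisson_gen_indep}. Write $P_{\idxset}\myeq\sum_{j\in\idxset}p_j$ for the total probability mass of the lumped classes. First I would observe that the number of $Y$-trials feeding the second stage, $S\myeq\sum_{j\in\idxset}Y_j$, is itself binomial: collapsing the $\ky$ classes of $Y$ into the two groups ``in $\idxset$'' and ``not in $\idxset$'' is the standard category-merging property of the multinomial, giving $S\sim\text{Binomial}(n,P_{\idxset})$. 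Hence $Z$ is a compound (binomial-mixed multinomial), and it suffices to marginalize $S$ out.

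The computation would proceed at the level of the p.m.f. Because $Z\mid S\myeq s\sim\Multi(\boldz;s,\mathbf{q})$ forces $\sum_{i=1}^{\kz}z_i\myeq s$, the value of $S$ is pinned down by $\boldz$, so the mixture collapses to essentially a single index (equivalently, I marginalize the last coordinate and sum over $s$). Concretely, writing $a\myeq\mzsum$ for the count in the first $\kz-1$ classes, I would express the marginal of $(Z_1,\dots,Z_{\kz-1})$ as
\begin{align}
\sum_{s\geq a}\frac{s!}{z_1!\cdots z_{\kz-1}!\,(s-a)!}\Big(\prod_{i=1}^{\kz-1}q_i^{z_i}\Big)q_{\kz}^{\,s-a}\binom{n}{s}P_{\idxset}^{\,s}(1-P_{\idxset})^{n-s}. \nonumber
\end{align}

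The key step is simplifying this sum. I would merge the multinomial coefficient with $\binom{n}{s}\myeq\frac{n!}{s!(n-s)!}$ so that $s!$ cancels, substitute $t\myeq s-a$ (which truncates automatically at $t\leq n-a$ via the $(n-a-t)!$ in the denominator), and regroup the factor $(q_{\kz}P_{\idxset})^{t}(1-P_{\idxset})^{n-a-t}$ under a single $\binom{n-a}{t}$. The binomial theorem then collapses the sum to $\big(q_{\kz}P_{\idxset}+1-P_{\idxset}\big)^{n-a}/(n-a)!$, and the base is exactly $q_{\kz}^{*}\myeq1-P_{\idxset}\,\mqsum$ after using $q_{\kz}\myeq1-\mqsum$. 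Re-collecting factors and using $P_{\idxset}^{a}\prod_{i<\kz}q_i^{z_i}\myeq\prod_{i<\kz}(P_{\idxset}q_i)^{z_i}\myeq\prod_{i<\kz}(q_i^{*})^{z_i}$ yields precisely the marginal of $\Multi(n,\mathbf{q}^{*})$ on its first $\kz-1$ coordinates, which determines the whole distribution.

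I expect the main obstacle to be conceptual bookkeeping rather than algebra: the target $\Multi(\boldz;n,\mathbf{q}^{*})$ has $n$ trials whereas the observed $Z$ sums only to $S\leq n$, so the last class of $\mathbf{q}^{*}$ must absorb both the genuine class-$\kz$ mass $P_{\idxset}q_{\kz}$ and the ``not selected'' mass $1-P_{\idxset}$. Keeping this merging straight (and hence matching the first $\kz-1$ coordinates rather than $\boldz$ coordinatewise) is exactly what makes the factorials line up. A cleaner but less self-contained alternative is the balls-and-urns argument: each of the $n$ trials independently lands in final class $i\in[\kz]$ with probability $P_{\idxset}q_i$ or is discarded with probability $1-P_{\idxset}$, so the augmented vector $(Z_1,\dots,Z_{\kz},\text{discarded})$ is multinomial on $\kz+1$ categories, and merging the last two categories produces $\mathbf{q}^{*}$.
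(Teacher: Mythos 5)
Your proposal is correct, and it reaches the result by a cleaner route than the paper's. The paper proves the lemma by brute force: it writes $P(Z\myeq\boldz)$ as a sum over all $\boldy\in C$ of the product of the two multinomial p.m.f.s, splits the indices into $\idxset$ and its complement, and applies the multinomial theorem twice (once to the inner sum over the $\idxset$-coordinates with fixed total, once to the outer sum). Your argument factors the computation through the sufficient statistic $S\myeq\sum_{j\in\idxset}Y_j$: since the conditional law of $Z$ depends on $Y$ only through $S$, and category merging gives $S\sim\text{Binomial}(n,\sum_{j\in\idxset}p_j)$, the problem reduces to a one-dimensional binomial mixture, and a single application of the binomial theorem (after cancelling $s!$ and shifting $t\myeq s-a$) produces $(q_{\kz}P_{\idxset}+1-P_{\idxset})^{n-a}\myeq (q_{\kz}^{*})^{n-a}$. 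The two computations are ultimately the same algebra --- the paper's inner sum over the set $B$ is exactly your lumping step in disguise --- but your organization isolates the one identity that matters and makes the analogy with the Poisson thinning result ($Y\sim\Po(\lambda)$, $Z|Y\sim\text{Binomial}(Y,p)\Rightarrow Z\sim\Po(p\lambda)$) explicit. You also correctly flag the one genuinely delicate point, which the paper handles only implicitly: the conclusion must be read as a statement about $(Z_1,\hdots,Z_{\kz-1})$ with the last class of $\mathbf{q}^{*}$ absorbing both the residual class-$\kz$ mass and the ``discarded'' mass $1-\sum_{j\in\idxset}p_j$, since $\sum_i Z_i\myeq S\leq n$. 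Your balls-and-urns coda is a valid coupling argument (construct both stages from i.i.d.\ categorical trials and merge the last two of the $\kz\myplus 1$ final categories) and would serve as an essentially algebra-free proof on its own.
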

\begin{proof}
    See Appendix~\ref{appx:proof-lemma-multinomial}.
\end{proof}
To obtain the UI result for the multinomial system, we show that we can construct a valid Markov distribution that lies in $\Delta_P$. Hence, one of the UI terms is zero.
\begin{theorem}\label{theorem:multinom_pid}
    Suppose $\msg,Y,$ and $Z$ are as defined in the multinomial system described above. Let $I_Y \myeq [s_y]$ and $I_Z \myeq [s_z]$ be the set of class indices of $Y$ and $Z$, respectively. If 
    \begin{align}\label{eq:multin_assumption}
        \min_{i \in I_Z} p_{z_i} \geq \min_{i \in I_Y} p_{y_i},
    \end{align}
    then there is a Markov chain in $\Delta_P$. Thus, $UI(M; Z \backslash Y) = 0$.
\end{theorem}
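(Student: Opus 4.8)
The plan is to mirror the Poisson argument: by Lemma~\ref{lemma:poisson_indep:1} it suffices to exhibit a single distribution in $\Delta_P$ that factors as a Markov chain $\msg\rightarrow\YQ\rightarrow\ZQ$, since this forces $UI(\msg; Z\backslash Y)=0$. First I would fix $\Tilde{Q}(\msg)\myeq P(\msg)$ and $\Tilde{Q}(\YQ\mid\msg)\myeq P(Y\mid\msg)=\Multi(\mathbf{y};\msg,\mathbf{p_y})$, so that the marginal $\Tilde{Q}(\msg,\YQ)=P(\msg,Y)$ holds automatically. All the remaining work goes into choosing a channel $\Tilde{Q}(\ZQ\mid\YQ)$ that does not depend on $\msg$ and whose composition with $P(Y\mid\msg)$ reproduces $P(Z\mid\msg)=\Multi(\mathbf{z};\msg,\mathbf{p_z})$; Lemma~\ref{lemma:multinom_conditional} is precisely the tool that evaluates such a composition in closed form.

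The key design choice is to apply Lemma~\ref{lemma:multinom_conditional} with the index set $\idxset=I_Y\setminus\{a\}$, where $a\myeq\argmin_{i\in I_Y}p_{y_i}$; that is, I re-thin every $Y$-item except those in the least-likely $Y$-class, which fixes $\sum_{j\in\idxset}p_{y_j}=1-\min_{i\in I_Y}p_{y_i}$. I denote this quantity by $\alpha$. After relabeling the classes of $Z$ (permissible since $UI$ is invariant under relabeling the alphabet of $Z$) so that its last class $s_z$ is the one attaining $\min_{i\in I_Z}p_{z_i}$, I set $q_i\myeq p_{z_i}/\alpha$ for $1\le i<s_z$ and $q_{s_z}\myeq 1-\sum_{i<s_z}q_i$. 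Feeding this $\mathbf{q}$ into Lemma~\ref{lemma:multinom_conditional} gives $q_i^*=\alpha q_i=p_{z_i}$ for $i<s_z$ and $q_{s_z}^*=1-\sum_{i<s_z}p_{z_i}=p_{z_{s_z}}$, so the induced marginal is exactly $\Tilde{Q}(\ZQ\mid\msg)=\Multi(\mathbf{z};\msg,\mathbf{p_z})=P(Z\mid\msg)$, as required for membership in $\Delta_P$.

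The only nontrivial step, and where the hypothesis is consumed, is checking that the vector $\mathbf{q}$ written down above is a genuine probability vector. Nonnegativity of $q_i$ for $i<s_z$ is immediate, so the binding requirement is $q_{s_z}\ge 0$, i.e. $\alpha\ge\sum_{i<s_z}p_{z_i}=1-p_{z_{s_z}}$. With the relabeling this reads $1-\min_{i\in I_Y}p_{y_i}\ge 1-\min_{i\in I_Z}p_{z_i}$, which is exactly the assumption~\eqref{eq:multin_assumption}. The per-coordinate bounds $q_i\le 1$ follow from the same inequality, since for $i<s_z$ we have $p_{z_i}\le 1-\min_{i\in I_Z}p_{z_i}\le\alpha$. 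Hence under the hypothesis $\mathbf{q}$ is valid, the Markov chain $\msg\rightarrow\YQ\rightarrow\ZQ$ lies in $\Delta_P$, and Lemma~\ref{lemma:poisson_indep:1} yields $UI(\msg;Z\backslash Y)=0$. I expect the main obstacle to be precisely this validity bookkeeping, together with correctly aligning which class of $Z$ plays the role of the ``reservoir'' class $s_z$ in Lemma~\ref{lemma:multinom_conditional}; once $\idxset$ is chosen to exclude the minimal $Y$-class, the nonnegativity constraint collapses exactly to the stated condition.
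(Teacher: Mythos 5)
Your proposal is correct and follows essentially the same route as the paper: it fixes $\Tilde{Q}(\msg)=P(\msg)$ and $\Tilde{Q}(\YQ|\msg)=P(Y|\msg)$, builds $\Tilde{Q}(\ZQ|\YQ)$ as a multinomial thinning over $\idxset=I_Y\setminus\argmin_{i\in I_Y}p_{y_i}$ with $q_i=p_{z_i}/\sum_{j\in\idxset}p_{y_j}$, verifies via Lemma~\ref{lemma:multinom_conditional} that the composed channel recovers $P(Z|\msg)$, and reduces the validity of $\mathbf{q}$ exactly to assumption~\eqref{eq:multin_assumption}. This is the paper's construction (with $I_Y^-$ and $\mathbf{p_z}^*$) in all but notation, and the conclusion via Lemma~\ref{lemma:poisson_indep:1} is identical.
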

\begin{proof}
We define $\Tilde{Q}(\msg, \YQ, \ZQ) \myeq P(\msg) P(Y|\msg) \Tilde{Q}(\ZQ|\YQ)$ with $\Tilde{Q}(\ZQ|\YQ) \myeq \text{Multinomial}\left(\mathbf{z}; \sum_{i \in I_Y^-} y_i, \mathbf{p_z}^* \right)$. Here, $I_Y^- \myeq I_Y \backslash \argmin_{i \in I_Y} p_{y_i}$ is the set of class indices of $Y$ excluding the index of the smallest probability element, and 
\begin{align}
    \mathbf{p_z}^* \myeq \setlength\arraycolsep{1.6pt}\begin{bmatrix}
        \frac{p_{z_1}}{\sum_{i \in I_Y^-} p_{y_i}}  &\hdots&\frac{p_{z_{s_z-1}}}{\sum_{i \in I_Y^-} p_{y_i}}&1- \frac{\sum_{i = 1}^{\sizemultb-1} p_{z_i}} {\sum_{i \in I_Y^-} p_{y_i}}
    \end{bmatrix}^T.
\end{align}
We first show the proposed $\Tilde{Q}(\ZQ|\YQ)$ is a valid distribution. By assumption~\eqref{eq:multin_assumption},
\begin{align}
    &\min_{i \in I_Z} p_{z_i} \geq \min_{i \in I_Y} p_{y_i}
    \\
    &\Rightarrow \sum_{i \in I_Y^-} p_{y_i} \geq \sum_{i \in I_Z^-} p_{z_i} \geq p_{z_i}, \; \forall i \in I_Z
    \\
    &\Rightarrow 1 \geq \frac{p_{z_i}}{\sum_{i \in I_Y^-} p_{y_i}} \geq 0,  \ \forall i \in I_Z,
\end{align}
where the last inequality is due to $\mathbf{p_y}$ and $\mathbf{p_z}$ forming valid multinomial distributions. Following the arguments made in the proof of Lemma \ref{lemma:indep_poisson:2}, we then show the two relevant pairwise marginals of $\Tilde{Q}$ are equal to those of the true $P$. The equality $\Tilde{Q}(\msg, \YQ) \myeq P(\msg, Y)$ is given by equation~\eqref{eq:poisson:lemma2:finaleq4} in Lemma \ref{lemma:indep_poisson:2}, while $\Tilde{Q}(\msg, \ZQ) \myeq P(\msg, Z)$ is given by Lemma \ref{lemma:multinom_conditional} and~\eqref{eq:poisson:lemma2:tildeq(x|y)_defn} of Lemma \ref{lemma:indep_poisson:2}.
Hence, $\Tilde{Q}(\msg, \YQ, \ZQ) \in \Delta_P$ and is a minimizer of~\eqref{eq:ui_def} by Lemma \ref{lemma:poisson_indep:1}, and $UI(M; Z \backslash Y) \myeq 0$.
\end{proof}

\section{Discussion and Limitations}\label{sec:discussion}
We study the bivariate PID for the Poisson system and the multinomial system. We provide sufficient conditions for both systems under which one of the UI terms in their respective PID is zero, thereby facilitating the computation and estimation of the remaining PID terms. {Our results can alternatively be interpreted as sufficient conditions for checking the stochastic degradeness between two Poisson channels (defined in Sec.~\ref{sec:problem_setup}) and two multinomial channels (defined in Sec.~\ref{sec:multinomial}).} 

In this work, our results are restricted to \emph{scalar} $\msg$ for both systems. Future work would address the extension to a vector $\msg$, as was done for the Gaussian case in \cite{venkatesh2022partial}. There also exist alternative definitions of multivariate Poisson distributions~\cite{johnson1997discrete}, which can be considered for constructing the Poisson system. The multivariate Poisson distribution used in this work only allows for positive covariance, limiting its modelling capabilities.

Notably, given that there exist three distinct systems (Poisson, multinomial, and Gaussian) for scalar $\msg$ in which one of the UI terms is zero, an interesting future direction is discovering the commonality between these systems to obtain a more general characterization of systems in which UI can be reduced to zero. 

\newpage
\enlargethispage{-1.4cm} 

%




\bibliographystyle{IEEEtran}
\bibliography{Bibliography}









\newpage
\appendices


\section{}
\subsection{Proof of Lemma~\ref{lemma:poisson_indep:1} }\label{appx:proof of lemma 1}
\begin{proof}
By the definition of the Markov chain we know that $\msg$ and $\boldZQ$ are conditionally independent given $\boldYQ$. Hence, for the Markov chain $\msg\rightarrow\boldYQ\rightarrow\boldZQ$, we have:
\begin{align}
    I_{\msg\rightarrow\boldYQ\rightarrow\boldZQ}(\msg;\boldZQ|\boldYQ) = 0,
\end{align}
where $I_{\msg\rightarrow\boldYQ\rightarrow\boldZQ}(\msg;\boldZQ|\boldYQ)$ is the conditional mutual information between $\msg$ and $\boldZQ$ given $\boldYQ$ for the Markov chain $\msg\rightarrow\boldYQ\rightarrow\boldZQ$. Since $I_Q(\msg;\boldZQ|\boldYQ)\geq 0\ \forall\ Q\in\Delta_P$, the Markov chain $\msg\rightarrow\boldYQ\rightarrow\boldZQ$ achieves the minimum for~\eqref{eq:ui_def}.
\end{proof}
\subsection{Proof of Lemma~\ref{lemma:poisson:p(yg|y)}}\label{appx:proof of lemma 2}
\begin{proof}
    To prove the above lemma, all we need to show is that $P(\boldY^g|\boldY,\msg)$ does not depend on $\msg$. We first calculate $P(\boldY^g|\boldY,\msg)$. By Bayes' Theorem, we know:
    \begin{align}
        P(\boldY^g|\boldY,\msg) = \frac{P(\boldY^g|\msg)P(\boldY|\boldY^g,\msg)}{P(\boldY|\msg)}\label{eq:pyg|X:0}
    \end{align}
    Let us now write the expression for $P(\boldY|\msg)$ using~\eqref{eq:poisson_multivariate:pmf}:
\begin{align}
    P(\boldY|\msg)\myeq& Q(\mathbf{y},\mathbf{y}')e^{-\mathbf{1}^T\boldsymbol{\Lambda}}\prod_{i=1}^{d}(\gamma_i \msg)^{y_i}\times\\&\sum_{\mathbf{y'}\in C}\prod_{j=2}^{d'}\prod_{(i_1,\hdots,i_j)\in\mathbb{A}_{j}^d}\left(\frac{\gamma_{i_1,\hdots, i_j}\msg^j}{\prod_{l=1}^{j}\gamma_{i_l}\msg}\right)^{y_{i_1,\hdots, i_j}'}.
\end{align}    
Canceling $\msg$ in the above equation for the terms inside the summation, we get:
\begin{align}
P(\boldY|\msg)\myeq& Q(\mathbf{y},\mathbf{y}')e^{-\mathbf{1}^T\boldsymbol{\Lambda}}\prod_{i=1}^{d}(\gamma_i \msg)^{y_i}\times\\&\sum_{\mathbf{y'}\in C}\prod_{j=2}^{d'}\prod_{(i_1,\hdots,i_j)\in\mathbb{A}_{j}^d}\left(\frac{\gamma_{i_1,\hdots, i_j}}{\prod_{l=1}^{j}\gamma_{i_l}}\right)^{y_{i_1,\hdots, i_j}'}.\label{eq:poisson_cancel}
\end{align}
Absorbing all terms that do not depend upon $\msg$ into $B$, i.e.: 
\begin{align}
  B = \sum_{\mathbf{y'}\in C}\prod_{j=2}^{d'}\prod_{(i_1,\hdots,i_j)\in\mathbb{A}_{j}^d}\left(\frac{\gamma_{i_1,\hdots, i_j}}{\prod_{l=1}^{j}\gamma_{i_l}}\right)^{y_{i_1,\hdots, i_j}'},\label{eq:}
\end{align}
 Then, we can rewrite~\eqref{eq:poisson_cancel} as: 
\begin{align}
    P(\boldY|\msg) &= Be^{-\mathbf{1}^T\boldsymbol{\Lambda}}\prod_{i=1}^{d}(\gamma_i \msg)^{y_i},\\
    &= Be^{-\mathbf{1}^T\boldsymbol{\Lambda}}\msg^{\sum_{i=1}^{d}y_i}\prod_{i=1}^{d}(\gamma_i)^{y_i},\\
    &\overset{(a)}{=}Be^{-\mathbf{1}^T\boldsymbol{\Lambda}}\msg^{\sum_{i=1}^{d}y_i}\overset{(b)}{=}Be^{-\mathbf{1}^T\boldsymbol{\Lambda}}\msg^{\mathbf{1}^T\mathbf{y}},\label{eq:p(yg|X):1}
\end{align}
where, in (a) we further absorb $\prod_{i=1}^{d}(\gamma_i)^{y_i}$ into $B$ since it does not depend upon $\msg$, and in (b) we substitute $\sum_{i=1}^d y_i=\mathbf{1}^T\mathbf{y}$. Similarly, let us write out the expression for $P(\boldY^g|\msg)$:
\begin{align}
    P(\boldY^g|&\msg)=e^{-\mathbf{1}^T\boldlambda}\prod_{j=1}^{d'}\prod_{(i_1,\hdots,i_j)\in\mathbb{A}_j^{d}}(\gamma_{i_1,\hdots, i_j}\msg^j)^{y^g_{i_1,\hdots, i_j}}.
\end{align}
Collecting all the $\msg$ terms, and
absorbing all the terms that do not depend upon $\msg$ into $D$, we obtain:
\begin{align}
    &P(\boldY^g|\msg)=De^{-\mathbf{1}^T\boldlambda}\msg^{\sum_{j=1}^{d'}\sum_{(i_1,\hdots,i_j)\in\mathbb{A}_j^{d}}\left(jy_{i_1,\hdots, i_j}^g\right)}.\label{eq:p(yg|X):2}
\end{align}
Now let us analyze the term $\mathbf{1}^TA\mathbf{y}^g$: 
\begin{align*}
  \mathbf{1}^TA\mathbf{y}^g\overset{(a)}{=}(\mathbf{y}^g)^TA^T\mathbf{1} \overset{(b)}{=}   (\mathbf{y}^g)^T\begin{bmatrix}
      A_1^T\mathbf{1}\\
      \vdots\\
      A_{d'}^T\mathbf{1}\end{bmatrix}\overset{(c)}{=}(\mathbf{y}^g)^T\begin{bmatrix}
      \mathbf{1}\\
      2\mathbf{1}\\
      \vdots\\
      d'\mathbf{1}
  \end{bmatrix},
\end{align*}
where (a) uses the fact that $\mathbf{1}^TA\mathbf{y}^g$ is a scalar and hence is equal to its transpose, (b) uses the fact $A\myeq [A_1\hdots A_{d'}]$, and (c) follows from the special structure of $A_i$, i.e. that each column only contains $i$ ones and $d-i$ zeros, and the fact that $A_1^T\mathbf{1}$ is akin to summing up the columns, hence $A_i^T\mathbf{1}=i\mathbf{1}$. Equivalently we can rewrite the above equation as:
\begin{align}
   \mathbf{1}^TA\mathbf{y}^g
   =&\sum_{j=1}^{d'}\sum_{(i_1,\hdots,i_j)\in\mathbb{A}_j^{d}}\left(jy_{i_1,\hdots, i_j}^g\right).\label{eq:p(yg|X):3}
\end{align}
Substituting~\eqref{eq:p(yg|X):3} into~\eqref{eq:p(yg|X):2}:
\begin{align}
    &P(\boldY^g|\msg)=De^{-\mathbf{1}^T\Lambda}\msg^{\mathbf{1}^TA\mathbf{y}^g}\label{eq:p(yg|X):4}
\end{align}
Now, let us write out the expression for $P(\boldY|\boldY^g, \msg)$. Since $\boldY=A\boldY^g$, $P(\boldY|\boldY^g)$ can be represented as a Kronecker delta function with the condition $\boldY=A\boldY^g$, i.e.
\begin{align}
    P(\boldY=\mathbf{y}|\boldY^g=\mathbf{y}^g, \msg)&=    P(\boldY=\mathbf{y}|\boldY^g=\mathbf{y}^g)\nonumber\\ &= \delta_{K}\left(\mathbf{y}=A\mathbf{y}^g\right),\label{eq:p(yg|X):5}
\end{align}
where $\delta_K(\cdot)$ is the Kronecker delta function.
Substituting~\eqref{eq:p(yg|X):1},~\eqref{eq:p(yg|X):4} and~\eqref{eq:p(yg|X):5} in~\eqref{eq:pyg|X:0}, we get: 
\begin{align}
 P(\boldY^g|\boldY,\msg) &= \frac{De^{-\mathbf{1}^T\Lambda}\msg^{\mathbf{1}^TA\mathbf{y}^g}\delta_K(\mathbf{y}=A\mathbf{y}^g)}{Be^{-\mathbf{1}^T\boldsymbol{\Lambda}}\msg^{\mathbf{1}^T\mathbf{y}}},\\
&\overset{(a)}{=} \frac{D\msg^{\mathbf{1}^TA\mathbf{y}^g}\delta_K(\mathbf{y}=A\mathbf{y}^g)}{B\msg^{\mathbf{1}^T\mathbf{y}}},\\
&\overset{(b)}{=} \frac{D\msg^{\mathbf{1}^TA\mathbf{y}^g}\delta_K(\mathbf{y}=A\mathbf{y}^g)}{B\msg^{\mathbf{1}^TA\mathbf{y}^g}},\\
 &\overset{(c)}{=}\frac{D\delta_K(\mathbf{y}=A\mathbf{y}^g)}{B},
\end{align}
where we obtain (a) by canceling the term $e^{-\mathbf{1}^T\boldsymbol{\Lambda}}$, (b) by
using the fact $\mathbf{y}=A\mathbf{y}^g$ due to the delta function, and (c) by canceling the term $M^{\mathbf{1}^TA\mathbf{y}^g}$. Since the terms $D,B$ and $\delta_K(\mathbf{y}=A\mathbf{y}^g)$ do not depend upon $\msg$, we can conclude $P(\boldY^g|\boldY,\msg)$ also does not depend upon $\msg$, i.e. $P(\boldY^g|\boldY,\msg)=P(\boldY^g|\boldY)$, or $\msg\rightarrow\boldY\rightarrow\boldY^g$.
\end{proof}
\subsection{Proof of Lemma~\ref{lemma:poisson_gen_indep} }\label{appx:proof of lemma 3}
\begin{proof}
This Lemma immediately follows as a consequence of Lemma~\ref{lemma:indep_poisson:2} described below. To show that that $\Tilde{Q}(\msg,\boldYQ,\boldZQ)$ lies in $\Delta_P$, all we need to show is $\Tilde{Q}(\msg,\boldYQ)=P(\msg,\boldY)$, and $P(\msg,\boldZQ)=\Tilde{Q}(\msg,\boldZ)$. The first equality, i.e. $\Tilde{Q}(\msg,\boldYQ)=P(\msg,\boldY)$ follows trivially from construction (for a more detailed argument, see the proof of Lemma~\ref{lemma:indep_poisson:2}). To show the second equality, i.e. $P(\msg,\boldZ)=\Tilde{Q}(\msg,\boldZQ)$, it suffices to show $P(\boldZ|\msg)=\Tilde{Q}(\boldZQ|\msg)$, as $P(\msg)=\Tilde{Q}(\msg)$ by construction. Let the support of $\boldY_i$ be $\mathcal{Y}_i$, and the support of $\boldY$ be $\mathcal{Y}$. By Law of Total Probability and using the fact that $\Tilde{Q}(\boldYQ|\msg)=P(\boldY|\msg)$, we know
\begin{align}
\Tilde{Q}(\boldZQ|\msg)&=\sum_{\mathbf{y}\in\mathcal{Y}}P(\boldY=\mathbf{y}|\msg)\Tilde{Q}(\boldZQ|\boldYQ=\mathbf{y}).\label{eq:lemma:poisson_gen_indep:2}
\end{align}
Now, since $\boldY=\begingroup\setlength\arraycolsep{1.6pt}\begin{bmatrix}
\boldY_{1}^T&\dots&\boldY_{d_1}^T
\end{bmatrix}^T\endgroup$, where $\{\boldY_i\}_{i=1}^{d_1}$ are mutually conditionally independent, we have $P(\boldY=\mathbf{y}|\msg)=\prod_{i=1}^{d_1}P(\boldY_i=\mathbf{y}_i|\msg)$. Substituting this fact in~\eqref{eq:lemma:poisson_gen_indep:2}, we get:
\begin{align}
\Tilde{Q}(\boldZQ|\msg)&=\sum_{\mathbf{y}\in\mathcal{Y}}\left[\prod_{i=1}^{d_1}P(\boldY_i=\mathbf{y}_i|\msg)\right]\Tilde{Q}(\boldZQ|\boldYQ=\mathbf{y}).\label{eq:lemma:poisson_gen_indep:3}
\end{align}
Using the fact that $\Tilde{Q}(\boldZQ|\boldYQ=\mathbf{y})=\prod_{i=1}^{d_2}\Tilde{Q}(\boldZQ_i|\boldYQ_i=\mathbf{y_i})$ in~\eqref{eq:lemma:poisson_gen_indep:3}, we get:
\begin{align}
\Tilde{Q}(\boldZQ|\msg)&=\sum_{\mathbf{y}\in\mathcal{Y}}\prod_{i=1}^{d_1}P(\boldY_i=\mathbf{y}_i|\msg)\prod_{i=1}^{d_2}\Tilde{Q}(\boldZQ_i|\boldYQ_i=\mathbf{y}_i).\label{eq:lemma:poisson_gen_indep:4}
\end{align}
Now, combining the two products in the equation~\eqref{eq:lemma:poisson_gen_indep:4}, we obtain:
\begin{align}
\Tilde{Q}(\boldZQ|\msg)&=\sum_{\mathbf{y}\in\mathcal{Y}}\prod_{i=1}^{d_1}P(\boldY_i=\mathbf{y}_i|\msg)\Tilde{Q}(\boldZQ_i|\boldYQ_i=\mathbf{y}_i),\label{eq:lemma:poisson_gen_indep:5}
\end{align}
where $\Tilde{Q}(\boldZQ_i|\boldYQ_i=\mathbf{y}_i)=1\ \forall\ i>d_2$. Now, the above expression can be equivalently expressed as:
\begin{align}
\Tilde{Q}(\boldZQ|\msg)&=\prod_{i=1}^{d_1}\sum_{\mathbf{y}_i\in\mathcal{Y}_i}P(\boldY_i=\mathbf{y}_i|\msg)\Tilde{Q}(\boldZQ_i|\boldYQ_i=\mathbf{y}_i),\label{eq:lemma:poisson_gen_indep:6}  
\end{align}
Now, for $i>d_2$, we know that $\Tilde{Q}(\boldZQ_i|\boldYQ_i=\mathbf{y}_i)=1$, so the following term reduces as
\begin{align}
    \sum_{\mathbf{y}_i\in\mathcal{Y}_i}P(\boldY_i=\mathbf{y}_i|\msg)\Tilde{Q}(\boldZQ_i|\boldYQ_i=\mathbf{y}_i) \nonumber 
    \\
    =\sum_{\mathbf{y}_i\in\mathcal{Y}_i}P(\boldY_i=\mathbf{y}_i|\msg)=1, \nonumber
\end{align}
since we are summing over a probability distribution. Hence, we can reduce~\eqref{eq:lemma:poisson_gen_indep:6} as follows:
\begin{align}
\Tilde{Q}(\boldZQ|\msg)&=\prod_{i=1}^{d_2}\underbrace{\sum_{\mathbf{y}_i\in\mathcal{Y}_i}P(\boldY_i=\mathbf{y}_i|\msg)\Tilde{Q}(\boldZQ_i|\boldYQ_i=\mathbf{y}_i)}_{=\Tilde{Q}(\boldZQ_i|\msg)}. \label{eq:lemma:poisson_gen_indep:7}  
\end{align}
Now, note that $P(\boldY_i|\msg)$ is a collection of mutually conditionally independent Poisson random vectors having rates of the form $\gamma_{i,j}^{\boldY}\msg^n$, and $\Tilde{Q}(\boldZQ_i|\boldYQ_i)$ is a Multinomial distribution with parameters: 
\begin{align*}
&N_i=\sum_{j=1}^{n_i}y_{i,j},
\\ 
&\mathbf{k}_i=\begingroup \setlength\arraycolsep{1.6pt}\begin{bmatrix}
z_{i,1}&\cdots&z_{i,m_i}&\sum_{j=1}^{n_i}y_{i,j}-\sum_{j=1}^{m_i}z_{i,j}
\end{bmatrix}^T\endgroup,\\
&\mathbf{p}_i=\begingroup \setlength\arraycolsep{1.6pt}\begin{bmatrix}
    \frac{\gamma_{i,1}^{\boldZ}}{\sum_{j=1}^{n_i}\gamma_{i,j}^{\boldY}}&\cdots&        \frac{\gamma_{i,m_i}^{\boldZ}}{\sum_{j=1}^{n_i}\gamma_{i,j}^{\boldY}}&1-\frac{\sum_{j=1}^{m_i}\gamma_{i,j}^{\boldZ}}{\sum_{j=1}^{n_i}\gamma_{i,j}^{\boldY}}    \end{bmatrix}^T\endgroup.\\
\end{align*}
Hence, using the result of Lemma~\ref{lemma:indep_poisson:2}, we know that $\tilde{Q}(\boldZQ_i|\msg)=\prod_{j=1}^{m_i}\Po(\gamma_{i,j}^{\boldZ}\msg^i)$. Furthermore, note that $P(\boldZ|\msg)=\prod_{i=1}^{d_2}P(\boldZ_i|\msg)$, where $P(\boldZ_i|\msg)=\prod_{j=1}^{m_i}\Po(\gamma_{i,j}^{\boldZ}\msg^i)$. Hence, we have $P(\boldZ_i|\msg)=\Tilde{Q}(\boldZQ_i|\msg)=\sum_{\mathbf{y}_i\in\mathcal{Y}_i}P(\boldY_i=\mathbf{y}_i|\msg)\Tilde{Q}(\boldZQ_i|\boldYQ_i=\mathbf{y}_i)$. Substituting this result in~\eqref{eq:lemma:poisson_gen_indep:7}, we obtain:
\begin{align}
\Tilde{Q}(\boldZQ|\msg)&=\prod_{i=1}^{d_2}P(\boldZ_i|\msg)=P(\boldZ|\msg),\label{eq:lemma:poisson_gen_indep:8}
\end{align}
which completes the proof.
\end{proof}
\begin{lemma}\label{lemma:indep_poisson:2}
  Let $\msg\in\mathbb{R}$, $\msg\sim P(\msg)$ and $P(\msg\leq 0)=0$. Define, $\boldY\in\mathbb{R}^{d_1}$ and $\boldZ\in\mathbb{R}^{d_2}$. Furthermore, define 
  \begin{align}
      P(\boldY|\msg)\myeq\prod_{i\in[d_1]}P(Y_i|\msg)\myeq\prod_{i\in[d_1]}\Po(\gamma_{i}^{\boldY}\msg^{n}), \nonumber
      \\
      P(\boldZ|\msg)\myeq\prod_{i\in[d_2]}P(Z_i|\msg)\myeq\prod_{i\in[d_2]}\Po(\gamma_{i}^{\boldZ}\msg^{n}) \nonumber
  \end{align}
  for some $n\in\N$. If $\sum_{i=1}^{d_1}\gamma_{i}^{\boldY}\geq\sum_{i=1}^{d_2} \gamma_{i}^{\boldZ}$, then the distribution $\Tilde{Q}(\msg,\boldYQ,\boldZQ)$ defined in~\eqref{eq:poisson:lemma2:1} lies in $\Delta_P$.
    \begin{align}
        \Tilde{Q}(\msg,\boldYQ, \boldZQ) = P(\msg)P(\boldY|\msg)\Tilde{Q}(\boldZQ|\boldYQ),\label{eq:poisson:lemma2:1}
    \end{align}
    where $\Tilde{Q}(\boldZQ|\boldYQ)$ is a $\Multi(\mathbf{k};N,\mathbf{p})$ distribution with:
    \begin{align}
     N&=\sum_{i=1}^{d_1}y_i, \nonumber
     \\ 
     \mathbf{k}&=\begin{bmatrix}
        z_1&\cdots&z_{d_2}&\sum_{i=1}^{d_1}y_i-\sum_{i=1}^{d_2}z_i
    \end{bmatrix}^T, \nonumber\\   
    \mathbf{p}&=\begin{bmatrix}
        \frac{\gamma_{1}^{\boldZ}}{\sum_{i=1}^{d_1}\gamma_{i}^{\boldY}}&\cdots&        \frac{\gamma_{d_2}^{\boldZ}}{\sum_{i=1}^{d_1}\gamma_{i}^{\boldY}}&1-\frac{\sum_{i=1}^{d_2}\gamma_{i}^{\boldZ}}{\sum_{i=1}^{d_1}\gamma_{i}^{\boldY}}    \end{bmatrix}^T.\label{eq:poisson:lemma2:qz|y_defn}
    \end{align}
\end{lemma}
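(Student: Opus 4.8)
The plan is to verify directly that $\Tilde{Q}(\msg,\boldYQ,\boldZQ)$ satisfies both marginal constraints defining $\Delta_P$, namely $\Tilde{Q}(\msg,\boldYQ) = P(\msg,\boldY)$ and $\Tilde{Q}(\msg,\boldZQ) = P(\msg,\boldZ)$. The first equality is immediate: marginalizing $\boldZQ$ out of the factorization $\Tilde{Q}(\msg,\boldYQ,\boldZQ) = P(\msg)P(\boldY|\msg)\Tilde{Q}(\boldZQ|\boldYQ)$ leaves $P(\msg)P(\boldY|\msg) = P(\msg,\boldY)$, because $\Tilde{Q}(\boldZQ|\boldYQ)$ is a bona fide conditional distribution summing to one. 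So the entire content of the lemma is the second equality, and since $\Tilde{Q}(\msg) = P(\msg)$ by construction, it suffices to show $\Tilde{Q}(\boldZQ|\msg) = P(\boldZ|\msg)$.

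First I would observe that the multinomial $\Tilde{Q}(\boldZQ|\boldYQ)$ depends on $\boldYQ$ only through the scalar total $N = \sum_{i=1}^{d_1}y_i$, and that the probability vector $\mathbf{p}$ is fixed independently of $\boldY$. Hence $\Tilde{Q}(\boldZQ|\msg) = \sum_{\boldy}P(\boldY = \boldy|\msg)\Tilde{Q}(\boldZQ|\boldYQ = \boldy)$ can be reorganized by first passing to the distribution of $N$: grouping all $\boldy$ with a common sum, and using the Poisson superposition property, the conditionally independent components $Y_i|\msg\sim\Po(\gamma_i^{\boldY}\msg^n)$ combine to $N|\msg\sim\Po(\Lambda)$ with $\Lambda = \sum_{i=1}^{d_1}\gamma_i^{\boldY}\msg^n$. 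Thus the problem reduces to a single Poisson count $N$ being split into $d_2+1$ categories (the $d_2$ coordinates of $\boldZ$ together with one remainder/``sink'' category) according to $\Multi(N,\mathbf{p})$.

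The crux is then the multivariate Poisson-splitting identity: if $N\sim\Po(\Lambda)$ and its counts are allocated by $\Multi(N,\mathbf{p})$, the category counts are mutually independent with the $j$-th distributed as $\Po(\Lambda p_j)$. I would prove this by the standard factorial cancellation, writing the joint p.m.f. as the product of $P(N = \sum_j w_j|\msg)$ and the multinomial coefficient, then using $e^{-\Lambda} = \prod_j e^{-\Lambda p_j}$ (valid since $\sum_j p_j = 1$) together with $\Lambda^{\sum_j w_j} = \prod_j\Lambda^{w_j}$ to factor the expression into $\prod_j \Po(\Lambda p_j)$, with the $\left(\sum_j w_j\right)!$ terms canceling against the multinomial coefficient. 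Evaluating the rate of the $j$-th coordinate gives $\Lambda p_j = \left(\sum_{i=1}^{d_1}\gamma_i^{\boldY}\msg^n\right)\frac{\gamma_j^{\boldZ}}{\sum_{i=1}^{d_1}\gamma_i^{\boldY}} = \gamma_j^{\boldZ}\msg^n$, exactly the target rate; marginalizing the sink category then leaves $\boldZ|\msg\sim\prod_{j=1}^{d_2}\Po(\gamma_j^{\boldZ}\msg^n) = P(\boldZ|\msg)$. The SNR hypothesis $\sum_{i=1}^{d_1}\gamma_i^{\boldY}\geq\sum_{i=1}^{d_2}\gamma_i^{\boldZ}$ enters only to guarantee that $\mathbf{p}$ is a valid probability vector, i.e. that the sink probability $1-\frac{\sum_{i=1}^{d_2}\gamma_i^{\boldZ}}{\sum_{i=1}^{d_1}\gamma_i^{\boldY}}$ is nonnegative. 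I expect the only genuine bookkeeping hurdle to be handling the sink category cleanly, so that its presence does not interfere with the independence of the $d_2$ coordinates of interest; everything else is routine cancellation.
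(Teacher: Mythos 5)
Your proposal is correct, and it reaches the same destination as the paper's proof by a cleaner, more modular route. The paper verifies $\Tilde{Q}(\boldZQ|\msg)=P(\boldZ|\msg)$ by one long explicit computation: it substitutes the product-Poisson and multinomial p.m.f.s into the law of total probability, restricts the sum to $\{\mathbf{y}\st\sum_i y_i\geq\sum_i z_i\}$, splits it into an outer sum over the total $k=\sum_i y_i$ and an inner sum over $\mathcal{Y}^k=\{\mathbf{y}\st\sum_i y_i=k\}$, kills the inner sum with the multinomial theorem, and recognizes the outer sum as the Taylor series of an exponential. Your two structural observations are exactly these two steps in disguise: noting that $\Tilde{Q}(\boldZQ|\boldYQ)$ depends on $\mathbf{y}$ only through $N=\sum_i y_i$ and passing to $N|\msg\sim\Po\bigl(\sum_i\gamma_i^{\boldY}\msg^n\bigr)$ is the paper's inner-sum/multinomial-theorem step (Poisson superposition), and the Poisson-splitting identity is the paper's outer-sum/Taylor-series cancellation. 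What your organization buys is transparency: the hypotheses each do one visible job (the SNR inequality makes $\mathbf{p}$ a valid probability vector; the fixed $\mathbf{p}$ and the sum-only dependence enable the reduction to $N$), and the factorial cancellation is quarantined inside a standard, reusable lemma --- one the paper itself cites informally in the binomial case when motivating Lemma~\ref{lemma:poisson_gen_indep}. What the paper's direct computation buys is self-containment and the intermediate identities (e.g.~\eqref{eq:poisson:lemma2:tildeq(x|y)_defn},~\eqref{eq:poisson:lemma2:finaleq4}) that are referenced verbatim in the proof of Theorem~\ref{theorem:multinom_pid}. Your handling of the sink category is also right: summing over $N\geq\sum_i z_i$ is precisely summing over the nonnegative sink count, so marginalizing it out is legitimate and leaves the claimed product of $\Po(\gamma_j^{\boldZ}\msg^n)$ laws.
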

\begin{proof}[Proof of Lemma \ref{lemma:indep_poisson:2}]
    To show $\Tilde{Q}(\msg,\boldYQ,\boldZQ)$ lies in $\Delta_P$, we need to show the following two equalities:
    \begin{align}
        \Tilde{Q}(\msg,\boldYQ) = P(\msg,\boldY) = P(\msg)P(\boldY|\msg),\label{eq:poisson:lemma2:2}\\
        \Tilde{Q}(\msg,\boldZQ) = P(\msg,\boldZ) = P(\msg)P(\boldZ|\msg).\label{eq:poisson:lemma2:3}
    \end{align}
     Showing the equality in~\eqref{eq:poisson:lemma2:2} is trivial as  by construction $\Tilde{Q}(\msg)=P(\msg)$, and $\Tilde{Q}(\boldYQ|M)=P(\boldY|\msg)$ which implies:
    \begin{align}
        \Tilde{Q}(\msg,\boldYQ) =\Tilde{Q}(\msg)\Tilde{Q}(\boldYQ|\msg)= P(\msg)P(\boldY|\msg).\label{eq:poisson:lemma2:finaleq4}
    \end{align}
    To show the equality in~\eqref{eq:poisson:lemma2:3}, let us calculate $\Tilde{Q}(\msg,\boldZQ)$:
    \begin{align}
        \Tilde{Q}(\msg,\boldZQ) &= \sum_{\mathbf{y}}P(\msg)P(\boldY=\mathbf{y}|\msg)\Tilde{Q}(\boldZQ|\boldYQ=\mathbf{y}),\\
        \intertext{Taking $P(\msg)$ out from the summation in the above equation:}        
        \Tilde{Q}(\msg,\boldZQ)&=P(\msg)\sum_{\mathbf{y}}P(\boldY=\mathbf{y}|\msg)\Tilde{Q}(\boldZQ|\boldYQ=\mathbf{y})\label{eq:poisson:lemma2:tildeq(x|y)_defn}
    \end{align}
Let us focus on the term $\sum_{\mathbf{y}}P(\boldY\myeq\mathbf{y}|\msg)\Tilde{Q}(\boldZQ|\boldYQ\myeq\mathbf{y})$. Since $\Tilde{Q}(\boldZQ|\boldYQ)\myeq\Multi(\mathbf{k};N,\mathbf{p})$, any $\mathbf{z}$ such that $\sum_{i=1}^{d_2}z_i>\sum_{i=1}^{d_1}y_i$ would have a probability of $0$. This implies that in the summation $\sum_{\mathbf{y}}P(\boldY|\msg)\Tilde{Q}(\boldZQ|\boldYQ)$, instead of summing over all possible $\mathbf{y}$, we should only sum over all $\mathbf{y}$ which satisfy the following inequality $\sum_{i=1}^{d_1}y_i\geq\sum_{i=1}^{d_2}z_i$ ($\Tilde{Q}(\boldZQ|\boldYQ)$ would be zero for all $\mathbf{y}$, where $\sum_{i=1}^{d_1}y_i<\sum_{i=1}^{d_2}z_i$). Denote the set $\mathcal{Y}_{\mathbf{z}}=\left\{\mathbf{y}\in\Nnot^{d_1}\st\sum_{i=1}^{d_1}y_i\geq \sum_{i=1}^{d_2}z_i\right\}$. Then, we have:
\begin{align}
&\sum_{\mathbf{y}}P(\boldY|\msg)\Tilde{Q}(\boldZQ|\boldYQ) = \sum_{\mathbf{y} \in \mathcal{Y}_{\mathbf{z}}}P(\boldY|\msg)\Tilde{Q}(\boldZQ|\boldYQ)\label{eq:poisson:lemma2:equivalence}
\end{align}
Now, since $\boldY$ is just a collection of mutually conditionally independent Poisson random variables, we can write $P(\boldY|\msg)$ as a product of Poisson distributions:
\begin{align}
P(\boldY|\msg)&\myeq\prod_{i=1}^{d_1}\Po\left(y_i;\gamma_{i}^{\boldY}\msgsmall^{n}\right)\nonumber\\ &= e^{-\msgsmall^n\sum_{i=1}^{d_1}\gamma_{i}^{\boldY}}\msgsmall^{n\sum_{i=1}^{d_1}y_i}\prod_{i=1}^{d_1}\frac{\left(\gamma_{i}^{\boldY}\right)^{y_i}}{y_i!}.\label{eq:poisson:py|x:defn}  
\end{align}
Substituting the expressions for $P(\boldY|\msg)$ from~\eqref{eq:poisson:py|x:defn} and $\Tilde{Q}(\boldZQ|\boldYQ)$ into the term $\sum_{\mathcal{Y}_{\mathbf{z}}}P(\boldY|\msg) \tilde{Q}(\boldZQ|\boldYQ)$
\begin{align}
&\sum_{\mathcal{Y}_{\mathbf{z}}}P(\boldY|\msg)\Tilde{Q}(\boldZQ|\boldYQ) =\sum_{\mathcal{Y}_{\mathbf{z}}}\underbrace{{e^{-\msgsmall^n\sum_{i=1}^{d_1}\gamma_{i}^{\boldY}}}\msgsmall^{n\sum_{i=1}^{d_1}y_i}}_{=P(\boldY=\mathbf{y}|\msg=\msgsmall)}\times\nonumber\\
&\underbrace{ \prod_{i=1}^{d_1}\frac{\left(\gamma_{i}^{\boldY}\right)^{y_i}}{y_i!}}_{=P(\boldY=\mathbf{y}|\msg=\msgsmall)}\times\underbrace{\frac{\left(\sum_{i=1}^{d_1}y_i\right)!}{{\left(\sum_{i=1}^{d_1}y_i-\sum_{i=1}^{d_2}z_i\right)!\prod_{i=1}^{d_2}z_i!}}}_{=\Tilde{Q}(\boldZQ=\mathbf{z}|\boldYQ=\mathbf{y})}\times\nonumber\\ 
& \underbrace{\left(1\myminus\frac{\sum_{i=1}^{d_2}\gamma_{i}^{\boldZ}}{\sum_{i=1}^{d_1}\gamma_{i}^{\boldY}}\right)^{\sum_{i=1}^{d_1}y_i\myminus\sum_{i=1}^{d_2}z_i}\prod_{i=1}^{d_2}\left(\frac{{\gamma_{i}^{\boldZ}}}{\sum_{j=1}^{d_1}\gamma_{j}^{\boldY}}\right)^{z_i}}_{=\Tilde{Q}(\boldZQ=\mathbf{z}|\boldYQ=\mathbf{y})}
\intertext{Taking out the terms on the left hand side that do not depend on $\mathbf{y}$}
&\sum_{\mathcal{Y}_{\mathbf{z}}}P(\boldY|\msg)\Tilde{Q}(\boldZQ|\boldYQ)\myeq \boxed{e^{-\msgsmall^n\sum_{i=1}^{d_1}\gamma_{i}^{\boldY}} \prod_{i=1}^{d_2}\left(\gamma_{i}^{\boldZ}\right)^{z_i}\frac{1}{\prod_{i=1}^{d_2}z_i!}}\nonumber\\&\sum_{\mathcal{Y}_{\mathbf{z}}}\msgsmall^{n\sum_{i=1}^{d_1}y_i}\prod_{i=1}^{d_1}\frac{\left(\gamma_{i}^{\boldY}\right)^{y_i}}{y_i!}\left(1-\frac{\sum_{i=1}^{d_2}\gamma_{i}^{\boldZ}}{\sum_{i=1}^{d_1}\gamma_{i}^{\boldY}}\right)^{\sum_{i=1}^{d_1}y_i-\sum_{i=1}^{d_2}z_i}\nonumber\\
&\times\prod_{i=1}^{d_2}\left(\frac{1}{\sum_{j=1}^{d_1}\gamma_{j}^{\boldY}}\right)^{z_i}\times\frac{\left(\sum_{i=1}^{d_1}y_i\right)!}{\left(\sum_{i=1}^{d_1}y_i-\sum_{i=1}^{d_2}z_i\right)!}  
\end{align}
Decompose the term $\msgsmall^{n\sum_{i=1}^{d_1}y_i}$ as $\msgsmall^{n\sum_{i=1}^{d_1}y_i-n\sum_{i=1}^{d_2}z_i}$ multiplied by $\msgsmall^{n\sum_{i=1}^{d_2}z_i}$, and simplify the term $\prod_{i=1}^{d_2}\left(1/{\sum_{j=1}^{d_1}\gamma_{j}^{\boldY}}\right)^{z_i}$ as $1/{\left(\sum_{i=1}^{d_1}\gamma_{i}^{\boldY}\right)^{\sum_{i=1}^{d_2}z_i}}$. Then, combining the terms $\msgsmall^{n\sum_{i=1}^{d_1}y_i-n\sum_{i=1}^{d_2}z_i}$, $1/{\left(\sum_{i=1}^{d_1}\gamma_{i}^{\boldY}\right)^{\sum_{i=1}^{d_2}z_i}}$ and $\left(1-\frac{\sum_{i=1}^{d_2}\gamma_{i}^{\boldZ}}{\sum_{i=1}^{d_1}\gamma_{i}^{\boldY}}\right)^{\sum_{i=1}^{d_1}y_i-\sum_{i=1}^{d_2}z_i}$, we obtain:
\begin{align}
&\sum_{\mathcal{Y}_{\mathbf{z}}}P(\boldY|\msg)\Tilde{Q}(\boldZQ|\boldYQ)= e^{-\msgsmall^n\sum_{i=1}^{d_1}\gamma_{i}^{\boldY}}\boxed{\msgsmall^{n\sum_{i=1}^{d_2}z_i}}\ \times\nonumber\\
&\prod_{i=1}^{d_2}\frac{\left(\gamma_{i}^{\boldZ}\right)^{z_i}}{z_i!}\sum_{\mathcal{Y}_{\mathbf{z}}}\frac{\left(\sum_{i=1}^{d_1}y_i\right)!}{\left(\sum_{i=1}^{d_1}y_i-\sum_{i=1}^{d_2}z_i\right)!}\prod_{i=1}^{d_1}\frac{\left(\gamma_{i}^{\boldY}\right)^{y_i}}{y_i!}\times\nonumber\\
&\boxed{\frac{\left(\msgsmall^{n}\left(\sum_{i=1}^{d_1}\gamma_i^{\boldY}-\sum_{i=1}^{d_2}\gamma_{i}^{\boldZ}\right)\right)^{\sum_{i=1}^{d_1}y_i-\sum_{i=1}^{d_2}z_i}}{\left(\sum_{i=1}^{d_1}\gamma_{i}^{\boldY}\right)^{\sum_{i=1}^{d_1}y_i}}}.
\end{align}
Rearranging the terms $\prod_{i=1}^{d_1}\frac{(\gamma_i^{\boldY})^{y_i}}{y_i!}$, $\left(\sum_{i=1}^d y_i\right)!$, and $\frac{1}{(\sum_{i=1}^{d_1}\gamma_{i}^{\boldY})^{\sum_{i=1}^{d_1}y_i}}$ as follows:
\begin{align}
&\sum_{\mathcal{Y}_{\mathbf{z}}}P(\boldY|\msg)\Tilde{Q}(\boldZQ|\boldYQ )\myeq e^{-\msgsmall^n\sum_{i=1}^{d_1}\gamma_{i}^{\boldY}}{\msgsmall^{n\sum_{i=1}^{d_2}z_i}}\prod_{i=1}^{d_2}\frac{\left(\gamma_{i}^{\boldZ}\right)^{z_i}}{z_i!}\nonumber\\
&\sum_{\mathcal{Y}_{\mathbf{z}}}\frac{1}{\left(\sum_{i=1}^{d_1}y_i-\sum_{i=1}^{d_2}z_i\right)!}\boxed{\frac{\left(\sum_{i=1}^{d_1}y_i\right)!}{\prod_{i=1}^{d_1}y_i!}{\prod_{i=1}^{d_1}\left(\frac{\gamma_{i}^{\boldY}}{\sum_{j=1}^{d_1}\gamma_{j}^{\boldY}}\right)^{y_i}}}\nonumber\\
&\left(\msgsmall^n\left(\sum_{i=1}^{d_1}\gamma_{i}^{\boldY}\myminus\sum_{i=1}^{d_2}\gamma_{i}^{\boldZ}\right)\right)^{\sum_{i=1}^{d_1}y_i-\sum_{i=1}^{d_2}z_i}
\end{align}
Define $\mathcal{Y}^{k}=\left\{\mathbf{y}\in(\Nnot)^{d_1} \st \sum_{i=1}^{d_1}y_i=k\right\}$, i.e. the set of all $d_1$-dimensional count vectors such that they sum up to $k$. {Since, $\mathcal{Y}_{\mathbf{z}}$ is the collection of all $d_1$-dimensional count vectors such that their sum is greater than $\sum_{i=1}^{d_1}z_i$, we can express $\mathcal{Y}_{\mathbf{z}}$ as an union of $\mathcal{Y}^k$. More concretely,} $\mathcal{Y}_{\mathbf{z}}=\cup_{k=\sum_{i=1}^{d_2}z_i}^{\infty}\mathcal{Y}^k$. Furthermore, we also have that $\mathcal{Y}^{k_1}$ and $\mathcal{Y}^{k_2}$ are disjoint sets iff $k_1\neq k_2$. Using these previous two facts, we can decompose the summation $\sum_{\mathcal{Y}_{\mathbf{z}}}$ as a double summation, i.e. $\sum_{\mathcal{Y}_{\mathbf{z}}}\equiv \sum_{k=\sum_{i=1}^{d_2}z_i}^{\infty}\sum_{\mathcal{Y}^k}$. Substituting this fact in the previous equation, we get:
\begin{align}
&\sum_{\mathcal{Y}_{\mathbf{z}}}P(\boldY|\msg)\Tilde{Q}(\boldZ|\boldY) =e^{-\msgsmall^n\sum_{i=1}^{d_1}\gamma_{i}^{\boldY}} \msgsmall^{n\sum_{i=1}^{d_2}z_i} \prod_{i=1}^{d_2}\frac{\left(\gamma_{i}^{\boldZ}\right)^{z_i}}{z_i!}\nonumber\\
&\times\sum_{k=\sum_{i=1}^{d_2}z_i}^{\infty}\sum_{\mathcal{Y}^k}\underbrace{\frac{\left(\sum_{i=1}^{d_1}y_i\right)!}{\prod_{i=1}^{d_1}y_i!}\prod_{i=1}^{d_1}\left(\frac{\gamma_{i}^{\boldY}}{\sum_{j=1}^{d_1}\gamma_{j}^{\boldY}}\right)^{y_i}}_{=\text{Term}\ 1}\times\nonumber\\
&\underbrace{{\frac{\left(\msgsmall^n\left(\sum_{i=1}^{d_1}\gamma_{i}^{\boldY}\myminus\sum_{i=1}^{d_2}\gamma_{i}^{\boldZ}\right)\right)^{\sum_{i=1}^{d_1}y_i-\sum_{i=1}^{d_2}z_i}}{\left(\sum_{i=1}^{d_1}y_i-\sum_{i=1}^{d_2}z_i\right)!}}}_{\text{Term} 2}
\end{align}
Note that Term 2 in the above equation only contains terms that depend on $\sum_{i=1}^{d_1}y_i$. Since all the $\mathbf{y}$ present in $\mathcal{Y}^k$ sum up to the same value, i.e. $k$, Term 2 is a constant with respect to the inner summation $(\sum_{\mathcal{Y}^k})$ in the above equation. Hence, moving Term 2 out of the inner summation and replacing all $\sum_{i=1}^{d_1}y_i$ by $k$ in Term 2, we obtain:
\begin{align}
&\sum_{\mathcal{Y}_{\mathbf{z}}}P(\boldY|\msg)\Tilde{Q}(\boldZ|\boldY) = e^{-\msgsmall^n\sum_{i=1}^{d_1}\gamma_{i}^{\boldY}} \msgsmall^{n\sum_{i=1}^{d_2}z_i} \prod_{i=1}^{d_2}\frac{\left(\gamma_{i}^{\boldZ}\right)^{z_i}}{z_i!}\nonumber\\
&\times\sum_{k=\sum_{i=1}^{d_2}z_i}^{\infty}\underbrace{{\frac{\left(\msgsmall^n\left(\sum_{i=1}^{d_1}\gamma_{i}^{\boldY}\myminus\sum_{i=1}^{d_2}\gamma_{i}^{\boldZ}\right)\right)^{k-\sum_{i=1}^{d_2}z_i}}{\left(k-\sum_{i=1}^{d_2}z_i\right)!}}}_{\text{Term} 2}\nonumber\\
&\times\left.\sum_{\mathcal{Y}^k}\underbrace{\frac{\left(\sum_{i=1}^{d_1}y_i\right)!}{\prod_{i=1}^{d_1}y_i!} \prod_{i=1}^{d_1}\left(\frac{\gamma_{i}^{\boldY}}{\sum_{j=1}^{d_1}\gamma_{j}^{\boldY}}\right)^{y_i}}_{=\text{Term 1}}
\right]\label{eq:poisson:lemma2:31}
\end{align}
Note that Term 1 is just a Multinomial distribution, with $n=\sum_{i=1}^{d_1} y_i$, $\mathbf{k}=\mathbf{y}$, and $\mathbf{p}=\begin{bmatrix}
    \frac{\gamma_{11}}{\sum_{i=1}^{d_1}\gamma_{i}^{\boldY}}&\cdots&    \frac{\gamma_{1d_1}}{\sum_{i=1}^{d_1}\gamma_{i}^{\boldY}}
\end{bmatrix}$. Since $\mathcal{Y}^k$ represents the whole support of this Multinomial distribution, the inner summation is just summing up a Multinomial distribution over all its support, i.e. Term 1 is equal to $1$. Substituting this fact in the above equation:
\begin{align}
&\sum_{\mathcal{Y}_{\mathbf{z}}}P(\boldY|\msg)\Tilde{Q}(\boldZ|\boldY) = e^{-\msgsmall^n\sum_{i=1}^{d_1}\gamma_{i}^{\boldY}} \msgsmall^{n\sum_{i=1}^{d_2}z_i} \prod_{i=1}^{d_2}\frac{\left(\gamma_{i}^{\boldZ}\right)^{z_i}}{z_i!}\nonumber\\
&\times\sum_{k'=0}^{\infty}\frac{\left(\msgsmall^n\left(\sum_{i=1}^{d_1}\gamma_{i}^{\boldY}\myminus\sum_{i=1}^{d_2}\gamma_{i}^{\boldZ}\right)\right)^{k'}}{k'!},
\end{align}
where $k'=k-\sum_{i=1}^{d_2}z_i$. Using the fact that the Taylor Series of $e^\msgsmall=\sum_{k=0}^{\infty}\frac{\msgsmall^k}{k!}$ in the above equation, we obtain:
\begin{align}
&\sum_{\mathcal{Y}_{\mathbf{z}}}P(\boldY|\msg)\Tilde{Q}(\boldZ|\boldY) = \left[e^{-\msgsmall^n\sum_{i=1}^{d_1}\gamma_{i}^{\boldY}} \msgsmall^{n\sum_{i=1}^{d_2}z_i} \right.\nonumber\\
&\times\left.\prod_{i=1}^{d_2}\frac{\left(\gamma_{i}^{\boldZ}\right)^{z_i}}{z_i!}\right]\times{e^{\msgsmall^n\sum_{i=1}^{d_1}\gamma_{i}^{\boldY}-\msgsmall^n\sum_{i=1}^{d_2}\gamma_{i}^{\boldZ}}}
\intertext{Combining the terms $e^{\msgsmall^n\sum_{i=1}^{d_1}\gamma_{i}^{\boldY}-\msgsmall^n\sum_{i=1}^{d_2}\gamma_{i}^{\boldZ}}$ and $e^{-\msgsmall^n\sum_{i=1}^{d_1}\gamma_{i}^{\boldY}}$, we obtain:}
&\sum_{\mathcal{Y}_{\mathbf{z}}}P(\boldY|\msg)\Tilde{Q}(\boldZ|\boldY) = e^{-\msgsmall^n\sum_{i=1}^{d_2}\gamma_{i}^{\boldZ}} \msgsmall^{n\sum_{i=1}^{d_2}z_i} \prod_{i=1}^{d_2}\frac{\left(\gamma_{i}^{\boldZ}\right)^{z_i}}{z_i!}\label{eq:poisson:lemma2:finaleq}
\intertext{Since $P(\boldZ|\msg)$ is also a product of Poisson distributions, i.e.:}
&P(\boldZ=\mathbf{z}|\msg=\msgsmall)=\prod_{i=1}^{d_2}\Po\left(z_i;\gamma_{i}^{\boldZ}\msgsmall^n\right)\nonumber\\ 
&= e^{-\msgsmall^n\sum_{i=1}^{d_2}\gamma_{i}^{\boldZ}}\msgsmall^{n\sum_{i=1}^{d_2}z_i}\prod_{i=1}^{d_2}\frac{\left(\gamma_{i}^{\boldZ}\right)^{z_i}}{z_i!},\label{eq:poisson:pz|x:defn}
\intertext{Combining~\eqref{eq:poisson:pz|x:defn},~\eqref{eq:poisson:lemma2:equivalence} and~\eqref{eq:poisson:lemma2:finaleq}, we obtain:}
&\Tilde{Q}(\boldZ|M)=\sum_{\mathbf{y}}P(\boldY|\msg)\Tilde{Q}(\boldZ|\boldY) = P(\boldZ|\msg)\label{eq:poisson:lemma2:finaleq2}\\
&\Rightarrow\Tilde{Q}(\msg,\boldZ)=P(\msg)P(\boldZ|\msg)=P(\msg,\boldZ)\label{eq:poisson:lemma2:finaleq3}
\end{align}
From~\eqref{eq:poisson:lemma2:finaleq4} and~\eqref{eq:poisson:lemma2:finaleq3}, we can conclude that $\Tilde{Q}(\msg,\boldY,\boldZ)$ has the same marginals $P(\msg,\boldY)$ and $P(\msg,\boldZ)$ as $P(\msg,\boldY,\boldZ)$, and hence lies in the set $\Delta_P$.
\end{proof}
\subsection{Proof of Lemma~\ref{lemma:multinom_conditional}}\label{appx:proof-lemma-multinomial}
\begin{proof}
    Without loss of generality, reshuffle the class indices of $\boldz$ such that $\idxset \myeq \{ i \st 1 \leq i \leq w \}$, where $|\idxset| \myeq w$. Let $\sumset \myeq \{ \boldy' \in \mathbb{N}_0^{\ky-1} \st  \sum_{i=1}^{\ky-1} y_i \leq n \}$, where $\mathbb{N}_0^{\ky-1}$ is a $(\ky-1)$-dimensional vector. By the law of total probability, 
    \begin{align}
        P(Z = \boldz) &= \sum_{\boldy \in C} P(Y = \boldy) P(Z = \boldz | Y = \boldy)
        \\
        &= \sum_{\boldy \in C} \frac{n!}{\prod_{i=1}^{\ky} y_i!} \prod_{i=1}^{\ky}p_i^{y_i} \cdot \frac{\left( \sum_{j \in \idxset} y_{j}\right) !}{\prod_{i=1}^{\kz}z_i!} \prod_{i=1}^{\kz} q_i^{z_i}.
    \end{align}
    Remove from the sum all terms that do not depend on $\mathbf{y}$. Note that $z_{\kz} \myeq \sum_{j \in \idxset} y_j - \sum_{i=1}^{\kz - 1} z_i$ and $q_{\kz}^{z_{\kz}} \myeq \left(1- \sum_{i=1}^{\kz} q_i \right)^{z_{\kz}}$ are both functions of $\mathbf{y}$. Let $\sumoverz \myeq \mzsum$ and $\sumoverq \myeq \sum_{i=1}^{\kz-1} q_i$.
    \begin{align}
        &P(Z = \boldz) = 
        \left( \frac{n!}{\prod_{i=1}^{\kz-1} z_i!} \prod_{i=1}^{\kz-1} q_i^{z_i} \right) 
        \nonumber
        \\
        &\times \sum_{\boldy \in \sumset} 
        \frac{\prod_{i=1}^{\ky} p_i^{y_i} \cdot \left( \sum_{j \in \idxset} y_{j} \right)!}{\prod_{i=1}^{\ky} y_i! \left( \sum_{j \in \idxset} y_{j} - \sumoverz \right)!}
        \left( 1- \sumoverq\right)^{\sum_{j \in \idxset} y_{j} - \sumoverz}.
    \end{align}
    Form a multinomial coefficient outside the sum:
    \begin{align}
        \label{eq:multinom_z_all}
        &P(Z = \mathbf{z}) = 
        \left( \frac{n!}{\prod_{i=1}^{\kz-1} z_i! \cdot \left( n - \sumoverz \right)!} \prod_{i=1}^{\kz-1} q_i^{z_i} \right)
        \nonumber
        \\
        &\times \sum_{\mathbf{y} \in \sumset} 
        \frac{\left( n - \sumoverz \right)! \prod_{i=1}^{\ky} p_i^{y_i} \cdot \left( \sum_{j \in \idxset} y_{j} \right)!}{\prod_{i=1}^{\ky} y_i! \left( \sum_{j \in \idxset} y_{j} - \sumoverz \right)!} \left( 1- \sumoverq \right)^{\sum_{j \in \idxset} y_{j} - \sumoverz}.
    \end{align}
    Let $\idxsetc = \{ i \st w < i \leq \ky \}$ be the set of indices that are not taken in the number of trials for $Z$. Consider the sum of $\mathbf{y} \in C$ alone. Separating terms that belong to $\idxset$ and $\idxsetc$,
    \begin{align}
        &\sum_{\mathbf{y} \in C} \frac{\left( n - \sumoverz \right)!}{\prod_{j \in \idxsetc} y_j! \left( \sum_{j \in \idxset} y_j - \sumoverz \right)!} \prod_{j \in \idxsetc} p_j^{y_j}
        \nonumber
        \\
        &\times \left( 1- \sumoverq \right)^{\sum_{j \in \idxset} y_{j} - \sumoverz}
        \cdot
        \frac{\left( \sum_{j \in \idxset} y_j \right)!}{\prod_{j \in \idxset} y_j!} \prod_{j \in \idxset} p_j^{y_j} \nonumber
    \end{align}
    Let $\sumset' = \{ \boldy' \in \mathbb{N}_0^{\ky - 1} \st \sum_{i=1}^{\ky - 1} y_i - \sum_{i=1}^{\kz-1} z_i \leq n - \sum_{i=1}^{\kz - 1} z_i \}$. Note $\mathbf{y} \in C \Leftrightarrow \mathbf{y} \in C'$, so we can equivalently sum over elements in $C'$. Perform a change of variable with $u = \sum_{j \in \idxset} y_j - \sumoverz$ and define $B = \{ y_j \in \mathbf{y} \st j \in \idxset, \sum_{j \in \idxset} y_j = u + \sumoverz \}$. Then the sum becomes
    \begin{align}
        &= \sum_{\mathbf{y} \in \sumset'} \frac{\left( n - \sumoverz \right)!}{\prod_{j \in \idxsetc} y_j! u!} \prod_{j \in \idxsetc} p_j^{y_j} \left( 1- \sumoverq  \right)^u 
        \nonumber
        \\
        &\times \sum_{y_j \in B} 
        {u + \sumoverz \choose y_{j_1}, ..., y_{j_m}} \prod_{j \in \idxset} p_j^{y_j}.
        \\
        \intertext{Note that the inner sum simplifies by the multinomial theorem, part of which we remove from the sum:}
        &= \sum_{\mathbf{y} \in \sumset'} \frac{\left( n - \sumoverz \right)!}{\prod_{j \in \idxsetc} y_j! u!} \prod_{j \in \idxsetc} p_j^{y_j} \left( 1- \sumoverq  \right)^u \left( \sum_{j \in \idxset} p_j \right)^{u + \sumoverz}
        \\
        &= \left( \sum_{j \in \idxset} p_j \right)^{\sumoverz} \sum_{\mathbf{y} \in C'} \frac{\left( n - \sumoverz \right)!}{\prod_{j \in \idxsetc} y_j! \cdot u!} \prod_{j \in \idxsetc} p_j^{y_j} 
        \nonumber
        \\
        &\times \left( \sum_{j \in \idxset} p_j - \sumoverq \sum_{j \in \idxset} p_j \right)^u.
        \\
        \intertext{Reapply the multinomial theorem to the sum of $\mathbf{y}$ over $\sumset'$:}
        &= \left( \sum_{j \in \idxset} p_j \right)^{\sumoverz} \left( \sum_{j \in \idxsetc} p_j + \sum_{j \in \idxset} p_j - \sumoverq \sum_{j \in \idxset} p_j \right)^{n - \sumoverz}
        \\
        &= \left( \sum_{j \in \idxset} p_j \right)^{\sumoverz} 
        \left( 1 - \sumoverq \sum_{j \in \idxset} p_j \right)^{n - \sumoverz}.
    \end{align}
    Returning to the expression in~\eqref{eq:multinom_z_all}, the marginal becomes
    \begin{align}
        &P(Z = \mathbf{z}) = \frac{n! \prod_{i=1}^{\kz-1} q_i^{z_i} }{\prod_{i=1}^{\kz-1} z_i! \cdot \left( n - \sumoverz \right)!} 
        \nonumber
        \\
        &\times \left( \sum_{j \in \idxset} p_j \right)^{\sumoverz} \left( 1 - \sumoverq \sum_{j \in \idxset} p_j \right)^{n - \sumoverz}
        \\
        &= \frac{n! \prod_{i=1}^{\kz-1} \left( \sum_{j \in \idxset} p_j \cdot q_i\right)^{z_i}}{\prod_{i=1}^{\kz-1} z_i! \cdot \left( n - \sumoverz \right)!} \left( 1 - \sumoverq \sum_{j \in \idxset} p_j \right)^{n - \sumoverz}
        \\
        &= \text{Multinomial} \left( \mathbf{z}; n, \mathbf{q}^* \right),
    \end{align}
    where $\mathbf{q}^*$ is defined as in~\eqref{eq:multinomial_distribution}.
\end{proof}
\section{}\label{sec:appx:B}
\subsubsection*{Deriving the p.m.f. of the Multivariate Poisson Distribution}
Let $\boldK\sim\Po(d,d',\boldsymbol{\Lambda})$, where $\boldK$ is a $d$-dimensional random vector. We know that $\boldK=A\boldK_g$, where $A=[A_1\hdots A_{d'}]$. Let the dimension of $\boldK^g$ be $d_{\boldK^g}$ and $\mathcal{S}_{\mathbf{k}}=\{\mathbf{k}^g\in\Nnot^{d_{\boldK^g}} \st \mathbf{k}=A\mathbf{k}^g\}$. Then using the fact $\boldK=A\boldK^g$, the p.m.f. of $\boldK$ can be expressed using the p.m.f. of $\boldK^g$ in the following manner:
\begin{align}
P(\boldK=\mathbf{k})=\sum_{\mathbf{k}^g\in\mathcal{S}_{\mathbf{k}}}P(\boldK^g=\mathbf{k}^g)\label{eq:pmf:1}
\end{align}
Since $\boldK^g$ is just a collection of mutually independent Poisson random variables, we can write $P(\boldK^g=\mathbf{k}^g)$ as a product of scalar Poisson distributions, i.e.
\begin{align}
&P(\boldK^g=\mathbf{k}^g)=\prod_{j=1}^{d'}\prod_{(i_1,\hdots,i_j)\in\mathbb{A}_j^d}e^{\lambda_{i_1,\hdots, i_{j}}}\frac{\lambda_{i_1,\hdots, i_{j}}^{k_{i_1,\hdots, i_j}^g}}{k_{i_1,\hdots, i_j}^g!}.\label{eq:pmf:2}
\end{align}
Substituting~\eqref{eq:pmf:2} into~\eqref{eq:pmf:1}, and collecting all $e^{\lambda_{i_1,\hdots, i_j}}$ terms we obtain:
\begin{align}
&P(\boldK\myeq\mathbf{k}){\myeq} e^{-\mathbf{1}^T\boldsymbol{\Lambda}}\sum_{\mathbf{k}^g\in\mathcal{S}_{\mathbf{k}}}\prod_{j=1}^{d'}\prod_{(i_1,\hdots,i_j)\in\mathbb{A}_j^d}\frac{\lambda_{i_1,\hdots, i_{j}}^{k_{i_1,\hdots, i_j}^g}}{k_{i_1,\hdots, i_j}^g!}\label{eq:pmf:4}
\end{align}
 Decompose the matrix $A=[A_1\ A']$, where $A'=[A_2\hdots A_{d'}]$ and $\mathbf{k}^g=[(\mathbf{k}_1^g)^T\ \mathbf{k}'^T]^T$, where $\mathbf{k}_1^g=[k^g_1\hdots k^g_{d}]^T$ and $\mathbf{k}'$ contains the rest of the elements in $\mathbf{k}^g$. Using the fact that $\mathbf{k}^g\in\mathcal{S}_{\mathbf{k}}\Rightarrow \mathbf{k}=A\mathbf{k}^g$ we have: $\mathbf{k}=A_1\mathbf{k}_1^g+A'\mathbf{k}'$, which combining with the fact that $A_1$ is an identity matrix, we obtain:
 \begin{align}
\mathbf{k}=\mathbf{k}_1^g+A'\mathbf{k}'\Rightarrow \mathbf{k}_1^g = \mathbf{k}-A'\mathbf{k'}\Rightarrow k_i^g = k_i-(a_i')^T\mathbf{k'},\label{eq:pmf:5}
 \end{align}
 where $a_i'$ is the $i$-th row of $A'$ and $i\in[d]$. Substituting~\eqref{eq:pmf:5} into~\eqref{eq:pmf:4}:
 \begin{align}
P(\boldK\myeq\mathbf{k})\myeq &e^{-\mathbf{1}^T\boldsymbol{\Lambda}}\sum_{\mathbf{k}'\in\mathcal{S}_{\mathbf{k}'}}\left[\prod_{i_1\in\mathbb{A}_1^d}\frac{\lambda_{i_1}^{k_{i}-(a_i')^T\mathbf{k'}}}{(k_{i}-(a_i')^T\mathbf{k'})!}\times
\right.\nonumber\\
&\left.\prod_{j=2}^{d'}\prod_{(i_1,\hdots,i_j)\in\mathbb{A}_j^d}\frac{\lambda_{i_1,\hdots, i_{j}}^{k_{i_1,\hdots, i_j}^g}}{k_{i_1,\hdots, i_j}^g!}\right]\label{eq:pmf:6}
\end{align}
 where the summation constraint is transformed to $\mathcal{S}_{\mathbf{k}'}=\{\mathbf{k}'\in\Nnot^{l-d} \st (a_i')^T\mathbf{k'}\leq k_i\ \forall\ i\in[d]\}$. The equivalence of the summations over the sets $\mathcal{S}_{\mathbf{k}'}$ and $\mathcal{S}_{\mathbf{k}}$ can be derived by considering that $k_i^g=k_i-(a_i')^T\mathbf{k}'$, and $k_i^g\geq 0$. Collecting all the factorial terms, we get: 
 \begin{align}
  &Q(\mathbf{k},\mathbf{k}')=\prod_{i=1}^{d}\frac{1}{(k_i-a_i'^T\mathbf{k}')!}\prod_{j=2}^{d'}\prod_{(i_1,\hdots,i_j)\in\mathbb{A}_j^{d}}\frac{1}{k^g_{i_1,\hdots ,i_j}!}, \label{eq:pmf:7} 
\end{align}
Substituting~\eqref{eq:pmf:7} into~\eqref{eq:pmf:6}, we obtain:
 \begin{align}
P(\boldK\myeq\mathbf{k})\myeq& e^{-\mathbf{1}^T\boldsymbol{\Lambda}}\sum_{\mathbf{k}'\in\mathcal{S}_{\mathbf{k}'}}\left[\prod_{i_1\in\mathbb{A}_1^d}\lambda_{i_1}^{k_{i}-(a_i')^T\mathbf{k'}}
\right.\nonumber\\
&\left.\prod_{j=2}^{d'}\prod_{(i_1,\hdots,i_j)\in\mathbb{A}_j^d}\lambda_{i_1,\hdots, i_j}^{k_{i_1,\hdots, i_j}^g}\right]Q(\mathbf{k},\mathbf{k}')\label{eq:pmf:8}
\end{align}
Taking the term $\prod_{i_1\in\mathbb{A}_1}\lambda_{i_1}^{k_i}$ out of the summation as it does not depend upon $\mathbf{k}'$, we obtain:
 \begin{align}
P(\boldK\myeq\mathbf{k})\myeq& e^{-\mathbf{1}^T\boldsymbol{\Lambda}}\prod_{i_1\in\mathbb{A}_1^d}\lambda_{i_1}^{k_i}\sum_{\mathbf{k}'\in\mathcal{S}_{\mathbf{k}'}}\left[\prod_{i_1\in\mathbb{A}_1^d}\lambda_{i_1}^{-(a_i')^T\mathbf{k'}}
\right.\nonumber\\&\left.\prod_{j=2}^{d'}\prod_{(i_1,\hdots,i_j)\in\mathbb{A}_j^d}\lambda_{i_1,\hdots, i_j}^{k_{i_1,\hdots, i_j}^g}\right]Q(\mathbf{k},\mathbf{k}')\label{eq:pmf:9}
\end{align}
Notice that $(a_{i}')^T\mathbf{k}'$ contains all the elements of $\mathbf{k}'$ which contains $i$ in their subscript. Hence, expanding the term $\prod_{i_1\in\mathbb{A}_1^d}\lambda_{i_1}^{-(a_i')^T\mathbf{k'}}$ and distributing over the product, we obtain the desired form:
\begin{align}
 &P(\boldK\myeq\mathbf{k})\myeq e^{-\mathbf{1}^T\boldsymbol{\Lambda}}\prod_{i_1\in\mathbb{A}_1^d}\lambda_{i_1}^{k_i}\sum_{\mathbf{k}'\in\mathcal{S}_{\mathbf{k}'}}\left[\prod_{(i_1,i_2)\in\mathbb{A}_2^d}\left(\frac{\lambda_{i_1,i_2}}{\lambda_{i_1}\lambda_{i_2}}\right)^{k_{i_1,i_2}^g}
\right.\nonumber\\
&\left.\times\hdots\times\prod_{(i_1,\hdots,i_{d'})\in\mathbb{A}_{d'}^d}\left(\frac{\lambda_{i_1,\hdots, i_{d'}}}{\lambda_{i_1}\hdots\lambda_{i_{d'}}}\right)^{k_{i_1,\hdots, i_{d'}}}\right]Q(\mathbf{k},\mathbf{k}').\label{eq:pmf:10}   
\end{align}

\end{document}